\journal{Applied Mathematical Modelling}
\newdefinition{rmk}{Remark}
\newdefinition{lemma}{Lemma}
\newdefinition{prop}{Proposition}
\newcommand{\defeq}{\vcentcolon=}
\def\p{\partial}
\def\({\text{\huge (}}
\def\){\text{\huge )}}
\def\bf{\textbf}
\def\]{\text{\huge ]}}
\def\[{\text{\huge [}}
\def\bf{\textbf}
\newcommand{\bi}{\begin{itemize}}
\newcommand{\ei}{\end{itemize}}
\newcommand{\be}{\begin{equation}}
\newcommand{\ee}{\end{equation}}
\newcommand{\ba}{\begin{align}}
\newcommand{\ea}{\end{align}}
\newcommand{\dd}{\mathrm{d}}
\newcommand\nc{\newcommand}
\nc\pad[2]{\frac{\p #1}{\p #2}} 
\nc\padd[2]{\frac{\p^2 #1}{\p
{#2}^2}} 
\nc\nd[2]{\frac{\mathrm{d} #1}{\mathrm{d} #2}} 
\nc\ndd[2]{\frac{\mathrm{d}^2 #1}{\mathrm{d}{#2}^2}} 
\nc\pat[2]{\frac{\mathrm{D} #1}{\mathrm{D}#2}} 
\nc\ov{\bar} 
\nc\ord[1]{{\cal
O}(#1)} \nc\ra{\rightarrow} \nc\Ra{\Rightarrow} \nc\dint{{\mbox ~
d}}
\DeclareMathOperator{\Pe}{Pe}       
\DeclareMathOperator{\Da}{Da}		
\newcommand{\bea}{\begin{eqnarray}}
\newcommand{\eea}{\end{eqnarray}}
\newcommand{\beas}{\begin{eqnarray*}}
\newcommand{\eeas}{\end{eqnarray*}}
\begin{document}

\begin{frontmatter}
\title{An analytical investigation into solute transport and sorption via intra-particle diffusion in the dual-porosity limit}

\author{Lucy C. Auton}
\address{Centre de Recerca Matem\`atica, Campus de Bellaterra, Edifici C, 08193 Bellaterra, Barcelona, Spain}
\author{Maria Aguareles\fnref{corrauth}}
\address{Department of Computer Science, Applied Mathematics and Statistics, Universitat de Girona,
Campus de Montilivi, 17071 Girona, Catalunya, Spain.}
\author{Abel Valverde}
\address{Universitat Polit\`ecnica de Catalunya, Barcelona, Spain\\ Visiting Fellow, Mathematical Institute, University of Oxford, Oxford, United Kingdom}
\author{Timothy G. Myers}
\address{Centre de Recerca Matem\`atica, Campus de Bellaterra, Edifici C, 08193 Bellaterra, Barcelona, Spain}
\author{Marc Calvo-Schwarzwalder}
\address{College of Interdisciplinary Studies, Zayed University, P.O. Box 144534 Abu Dhabi, United Arab Emirates}

\fntext[corrauth]{Corresponding author: maria.aguareles@udg.edu}
\begin{abstract}
We develop a mathematical model for adsorption based on averaging the flow around, and diffusion inside, adsorbent particles in a column. The model involves three coupled partial differential equations for the contaminant concentration both in the carrier fluid and within the particle as well as the adsorption rate. The adsorption rate is modelled using the Sips equation, which is suitable for describing both physical and chemical adsorption mechanisms. Non-dimensionalisation is used to determine the controlling parameter groups as well as to determine negligible terms and so  reduce the system complexity. The inclusion of intra-particle diffusion introduces new dimensionless parameters to those found in standard works, including a form of internal Damk\"ohler number and a new characteristic time scale. We provide a numerical method for the full model and show how in certain situations a travelling wave approach can be utilized to find analytical solutions. The model is validated against available experimental data for the removal of Mercury(II)  and CO$_\text{2}$. The results show excellent agreement with measurements of column outlet contaminant concentration and provide insights into the underlying chemical reactions.
\end{abstract}

\begin{keyword}
Contaminant removal \sep Adsorption \sep Advection-diffusion \sep Mathematical Analysis \sep Asymptotics
\MSC[2010] 35Q35\sep 35B40
\end{keyword}

\end{frontmatter}

\section{Introduction}\label{sec:intro}

Column sorption is a popular practical sorption method and is used for a wide range of processes, such as the removal of emerging contaminants, volatile organic compounds, $\text{CO}_\text{2}$, dyes and salts. With this technique, pollutants are removed from a fluid by letting the mixture flow through a column filled with a porous material, the adsorbent, which captures the contaminant. The mathematical description of these processes may be traced back to the 20$^\text{th}$ century and the most commonly used model is that of Bohart and Adams \cite{Bohart20}. Recently, the validity and physical accuracy of this model have been discussed \cite{Myers24,Valverde2024,Myer22,Myer20a,Myer20b} highlighting a need for mathematical descriptions that better describe the underlying physical and chemical processes. 

Many sorption filters may be described as dual-porosity filters, \textit{i.e.}, they comprise an array of grains each of which is itself porous. This leads to two distinct regions: the `inter-particle' region (between the grains) and the 'intra-particle' region (within each porous grain). This distinction is neglected in standard models, where it is assumed that the adsorbate (\textit{i.e.}, the material being removed)  immediately attaches to the adsorbent. However, as the size of the adsorbing particles increases (for example, as the scale of the process is increased from experimental to industrial scale) it is clear that the intermediate step where the contaminant first diffuses into the adsorbent and then attaches to its inner surface has to be accounted for.  This observation is  supported by the recorded differences in qualitative behaviour of the breakthrough curves (the contaminant concentration at the column outlet) as the particle size increases \cite{Patel2022,Valverde2024}.

A classic kinetic model for adsorption is attributed to Langmuir \cite{Langmuir1918} and it is based on the assumption that the underlying  mechanism is caused by the  attraction  between a monolayer of adsorbate and the available sites on the surface of the adsorbent (physisorption). The rate of adsorption is then proportional to the contaminant concentration and  available sites. In practice, many adsorbents remove contaminants via a chemical reaction (chemisorption). If the chemical reaction is such that one molecule of contaminant reacts with one molecule of adsorbent then the Langmuir model also proves effective. However, in reality these chemical reactions can often be more complex, involving a chain of chemical reactions or a non one-to-one relation between the number of molecules of adsorbent and number of molecules of contaminant. A model proposed for these cases is the Sips model \cite{sips1948structure}, which has two  parameters that depend on the underlying chemistry (on the global order of the reaction) within the system.

In the context of physical adsorption and intra-particle diffusion, \citet{Mondal19} proposed a model that was applied to the removal of arsenic from drinking water in India, but the averaging process conducted there shows some inconsistencies that were later addressed by  \citet{Valverde2024}.  Also, \citet{Seiden19} proposed an intra-particle diffusion model for investigating the potential of microplastics to adsorb and transport contaminants. In the present paper, we propose a model to account for chemical adsorption  utilizing the Sips model, which has been shown to provide accurate results for small particles, when intra-particle diffusion is negligible \cite{Aguareles2023}. This is particularly interesting as this kinetic model reduces to more standard ones, such as the Langmuir or Freundlich models, in certain limits. To support the mass transfer model developed in \citet{Valverde2024}, we re-develop the mathematical description with a higher degree of mathematical rigor at the same time that we extend the results to adsorbent particles of arbitrary shape. 

This article is structured as follows. In \S\ref{sec:model} we develop a macroscopic model based on a series of averages of microscopic variables, and discuss the related dimensionless parameters.
In \S\ref{sec:solutionmethods} we investigate numerical and  semi-analytical solutions for physically realistic parameter combinations using a travelling wave approach. Further, we provide general results that determine in which situations such travelling wave solutions exists.
The analytical expressions are validated against experimental data in \S\ref{sec:results}. Finally, \S\ref{sec:conclusions} contains the conclusions and thoughts on possible future research.

\section{Model problem} \label{sec:model}
\begin{figure}[ht!]
    \centering
    \includegraphics[width=1\textwidth]{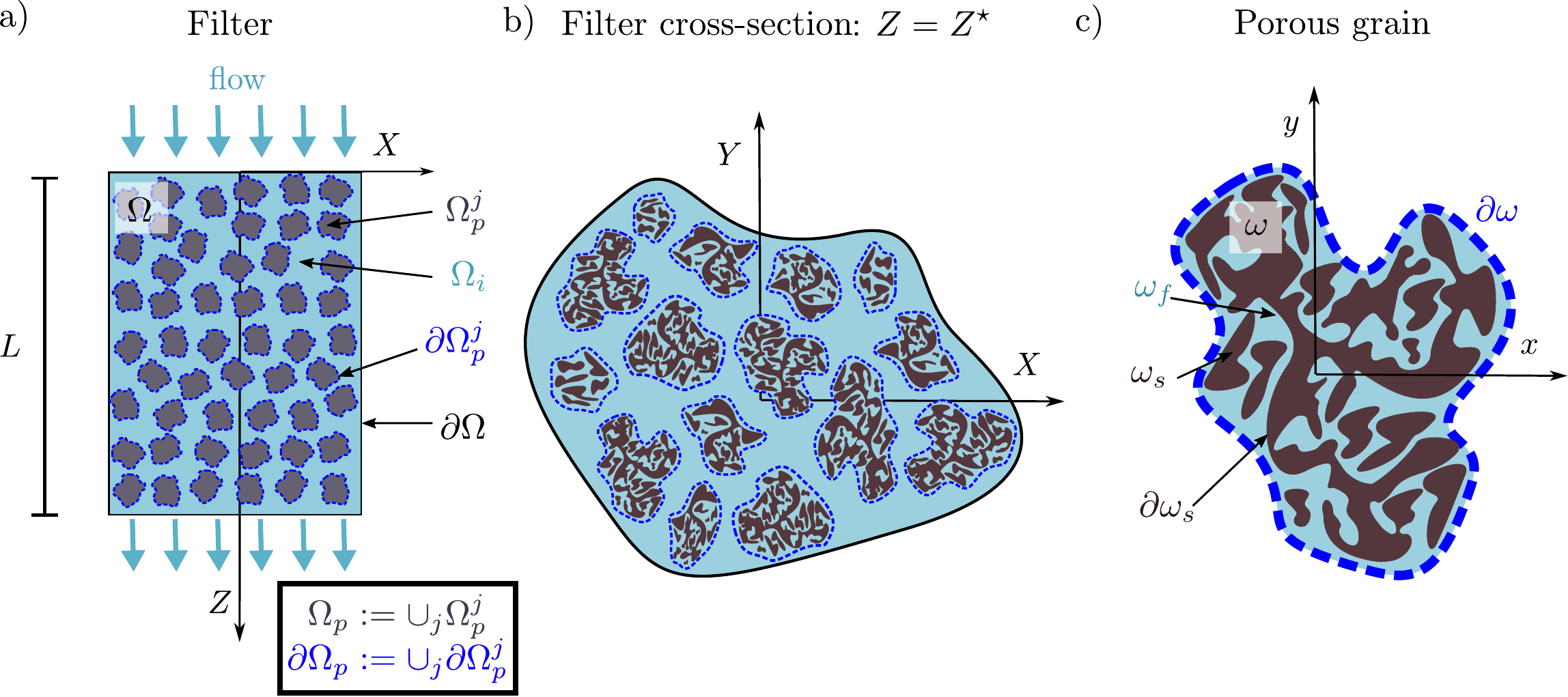}
    \caption{Flow of fluid carrying solute through a dual-porosity filter. (a)
    The fluid travels in the ${Z}$ direction with the inlet defined by ${Z}=0$ and outlet by ${Z}={L}$, (b) the cross-section of the filter is spanned by the ${X}$-${Y}$ plane. (c) Within any given identical but arbitrarily shaped three-dimensional grain, positions are determined via a local $(x,y,z)$ coordinate system.}
    \label{fig:expsetup}
\end{figure}

We consider the flow of fluid carrying a solute through a rigid porous cylinder of length ${L}$.  The solute advects, diffuses, and is removed via adsorption to the solid structure (the adsorbent).  The spatial coordinate within the cylinder is ${\bm{X}} \defeq({X},{Y},{Z})^\intercal$ with ${Z}$ the longitudinal coordinate and ${X}$ and ${Y}$ are the coordinates in the plane of the cross-section (see Figure \ref{fig:expsetup}a,b).  The fluid enters the porous medium uniformly through the inlet, at ${Z}=0$, and exits at ${Z}={L}$. 

The entire domain of the porous medium, denoted ${\Omega}$, comprises an array of $N$ identical but arbitrarily shaped porous grains. 
Within any porous grain we define a local spatial co-ordinate ${\bm{x}} \defeq({x},{y},{z})^\intercal$ ~(Figure \ref{fig:expsetup}c).  Note that although every grain is identical, when taking a cross-section the plane of intersection can contain obstacles of different sizes and shapes.

The solute molecules are assumed to be much smaller than the solid obstacles. We measure the local molar concentration of solute (amount of solute per volume of fluid \textit{surrounding the porous grains}) via the inter-particle concentration field ${c}({\bm{X}},{t})$, measured in moles/m$^3$. This is defined over the inter-particle fluid domain, denoted ${\Omega}_i$, and we impose that  ${c}({\bm{X}},{t}) \equiv 0$ inside  the porous grains.  We denote the domain comprising the union of all porous grains as ${\Omega}_p\defeq{\Omega}\setminus{\Omega}_i \equiv  \bigcup{\Omega}_p^j$, where ${\Omega}_p^j$ denotes the domain of the $j^\mathrm{th}$ porous grain (Figure \ref{fig:expsetup}a). 

As we have both local and global spatial coordinates, $\bm{x}$ and $\bm{X}$,  respectively, we also define the domain of any porous grain relative to $\bm{x}$ to be $\omega$ so that ${\bm{x}}\in {\omega} \equiv {X}\in \Omega_p^j$, for some $j$. 
Each porous grain is bounded by a fluid-fluid interface that is a smooth surface $\partial {\omega}$ that is the enclosure of the porous grain 
(see Figure \ref{fig:expsetup}c). 
We denote the local molar concentration of solute (amount of solute per volume of fluid \textit{contained within a given grain}) via the intra-particle concentration field ${c}_p^j({\bm{x}},{\bm{X}},{t})$, measured in moles/m$^3$,
which is defined within the internal fluid region denoted ${\omega}_f$. We extend the definition of ${c}^j_p$ across the entire porous particle domain by enforcing  that ${c}^j_p\equiv 0$ in ${\omega}_s\defeq {\omega}\setminus{\omega}_f$, that is the solid domain in the particle. 

We track how much solute adsorbs to the $j^\text{th}$ porous grain via a mass sink ${M}_p^j({\bm{X}},{t})$,  measured in moles,
however,  we neglect any impact of this adsorption on the size and volume of the solid, that is, we take ${\omega}_s$, and consequently every domain, to be independent of time. This is justified by the fact that the solute molecules are negligible in size relative to the obstacles and also because in this work the fluids under consideration are assumed to have low contaminant concentrations.

We define the total porosity of the cylinder to be $\Phi$, that is, the void fraction of the porous material which comprises both the void space between the grains and the void space within each grain. 
We take the average porosity within any grain to be the same  and define this value to be the `intra-particle' porosity  $\phi_p\defeq|{\omega}_f|/|{\omega}|$, for all $j$.  
Finally, we define the 
`inter-particle' porosity $\phi \defeq |{\Omega}_i|/|{\Omega}|$ to be the value that would be calculated if the grains had zero porosity (\textit{i.e.}, $ \phi\equiv\Phi \iff \phi_p=0 $). Thus we relate $\Phi$, $\phi_p$ and $\phi$ via 
\begin{equation}
    1-\Phi \equiv (1-\phi)(1-\phi_p)\, .
\end{equation}

\subsection{Full model}
We now derive equations for both the inter-particle concentration of adsorbate, $c({\bf X},t)$ and the intra-particle concentration, ${c}_p^j({\bm{x}},{\bm{X}},{t})$.

\subsubsection{Inter-particle model}
In the inter-particle region we model the flow as unidirectional and uniform along the length of the cylinder, that is, we take a constant Darcy flux ${\bm{Q}}\defeq\Phi {v}\bm{e}_Z$ such that ${v}$ is the average interstitial velocity of the fluid and where $\bm{e}_Z$ is the unit vector along the length of the filter. The transport is then described by an advection-diffusion equation 
\begin{subequations}
    \begin{equation}
    \label{itsgonemidnight}
        \frac{\partial {c}}{\partial {t}} + \frac{\partial}{\partial {Z}}({v} {c})= \bm{\nabla}_{\bm{X}}\cdot\left({\bm{\mathfrak{D}}}\cdot\bm{\nabla}_{\bm{X}}{c}\right),\quad {\bm{X}}\in{\Omega}_i\, ,
    \end{equation}
 where  
  \begin{equation}
     {\bm{\mathfrak{D}}}({\bm{X}}) \defeq 
     \begin{pmatrix}
{D}_{11}({\bm{X}}) &  {D}_{12}({\bm{X}}) & 0\\
{D}_{21}({\bm{X}}) & {D}_{22}({\bm{X}}) & 0\\
0&0&D
\end{pmatrix} = \begin{pmatrix}
    \bm{\mathfrak{D}}^{X,Y}({\bm{X}}) & 0 \\0 & D
\end{pmatrix}\, ,
 \end{equation}
 is the effective diffusivity tensor and where $\bm{\nabla}_{\bm{X}}$ is the gradient vector with respect to the spatial coordinate ${\bm{X}}$. 
Note that $ {\bm{\mathfrak{D}}}$ comprises contributions from molecular diffusion, Taylor or shear dispersion, and turbulent mixing; we allow for shear in the in plane components of $ {\bm{\mathfrak{D}}}$ and further allow for dependence on ${\bm{X}}$ but we take the longitudinal component to be a purely unidirectional constant whose value is an experimentally determined average. 

The condition at the impermeable wall of the cylinder, $\partial{\Omega}$, reads
\begin{equation}
\label{cylinderBC}
\left({\bm{\mathfrak{D}}}\cdot\bm{\nabla}_{\bm{x}}{c}\right)\cdot\bm{n}_o=0, \quad {\bm{X}}\in\partial{\Omega},
\end{equation}
where $\bm{n}_o$ is the unit, outward-facing normal to $\partial{\Omega}$.  At the fluid-fluid interface which bounds the porous grains,  we enforce a mass balance between the inter- and intra-particle regions such that the mass transfer is determined by an empirical relation (discussed  in \S\ref{x-sec-ave-sect}). These conditions couple the inter-particle and intra-particle regions, thus before considering any conditions here, we present the problem in the intra-particle problem. 
\end{subequations}

\subsubsection{Intra-particle model}
As stated we consider a dual-porosity limit, \textit{i.e.}, there is a preferential flow path in the inter-particle domain. 
As such we assume that the flow, and consequently contaminant transport via advection, is negligible within the porous grains and so model the contaminant concentration ${c}_p^j$ within each grain via a diffusive transport equation 
\begin{equation}
\label{intra_raw}
    \frac{\partial{c}_p^j}{\partial{t}}=\bm{\nabla}_{\bm{x}}\cdot\left({\bm{\mathfrak{D}}}_p({\bm{x}})\cdot\bm{\nabla}_{\bm{x}}{c}_p^j\right),\quad{\bm{x}}\in{\omega}_f\, , \quad {\bm{X}}\in{\Omega}_p^j\, ,
\end{equation}
where ${\bm{\mathfrak{D}}}_p({\bm{x}})$ is an effective diffusivity tensor  and  $\bm{\nabla}_{\bm{x}}$ is the gradient vector with respect to the spatial coordinate ${\bm{x}}$. 
At the fluid-solid interface, $\partial {\omega}_s$, the adsorption reaction occurs 
\begin{equation}
\label{BC_ad_sa}
\int_{\partial {\omega}_s}\Big({\bm{\mathfrak{D}}}_p({\bm{x}})\cdot\bm{\nabla}_{\bm{x}}{c}_p^j\Big)\cdot\bm{n}_s \mathrm{d}S_x =
\frac{\partial{M}_p^j}{\partial {t}}\, , \quad {\bm{X}}\in{\Omega}_p^j\, , 
 \end{equation}
 where $\bm{n}_s$ is the outward facing unit normal to the solid in a porous grain,   $\mathrm{d}S_x$ is the area element with respect to $\bm{x}$ and we recall that ${M}_p^j({\bm{X}},{t})$ is the number of moles that have been adsorbed by the $j^\text{th}$ porous grain. Equation (\ref{BC_ad_sa}) is then a mass flux  balance through the surface ${\omega}_s$; $M_p^j$ will be specified based on the chemical behaviour of the system. 

At the fluid-fluid interface $\partial \omega$,  we  apply  a condition describing the total flux of mass $F_j$ entering the $j^\text{th}$ porous grain 
\begin{equation}
\label{BC_F_intra}
\int_{\partial {\omega}}\Big({\bm{\mathfrak{D}}}_p({\bm{x}})\cdot\bm{\nabla}_{\bm{x}}{c}_p^j\Big)\cdot\bm{n}_p \mathrm{d}S_x =
F_j, \quad {\bm{X}}\in{\Omega}_p^j, 
 \end{equation}
 where $\bm{n}_p$ is the outward-facing unit normal to the porous grain. 
We later specify the functional form of $F_j$. Note that the same  condition must apply in the interparticle region to ensure conservation of mass (see \S\ref{x-sec-ave-sect}). 

The aim of this model is to predict the macroscopic and filtration behaviour of materials. As such, we are not concerned with modelling the diffusive behaviour within a single grain but rather the effect of these grains on the macroscale filtration properties. Thus, we consider the intrinsic (fluid) average over a porous grain for a given quantity $\star=\star({\bm{x}},{\bm{X}},{t})$ 
    
\begin{equation}
\label{vol_ave}
\langle\star\rangle({\bm{X}},{t})\defeq\frac{1}{|{\omega}_f|}\int_{{\omega}_f}\star({\bm{x}},{\bm{X}},{t})\ \mathrm{d}V_x \equiv \frac{1}{\phi_p|{\omega}|}\int_{{\omega}_f}\star({\bm{x}},{\bm{X}},{t})\ \mathrm{d}V_x\, , \quad {\bm{X}}\in{\Omega}_p^j\, , 
\end{equation}
 for some $j$, where $|{\omega}_f|$ is the total fluid volume in a given porous grain and where $\mathrm{d}V_x$ is a volume element in the intra-particle fluid region, with respect to $\bm{x}$. 
 Taking the intrinsic average of Equation~(\ref{intra_raw}), applying the divergence theorem,  and recalling that ${c}_p^j \equiv 0$ for ${\bm{x}}\in {\omega}_s$, yields

\begin{align}
\label{intra_dev}
    \frac{\partial \langle {c}_p^j\rangle}{\partial {t}} &=
    \frac{1}{|{\omega}_f|}\int_{\partial \omega_f} \Big({\bm{\mathfrak{D}}}_p({\bm{x}})\cdot\bm{\nabla}_{\bm{x}}{c}_p^j\Big)\cdot \bm{n}_f \mathrm{d}S_x 
     \nonumber \\ 
    &\equiv  -\frac{1}{|{\omega}_f|}\int_{\partial \omega_s} \Big({\bm{\mathfrak{D}}}_p({\bm{x}})\cdot\bm{\nabla}_{\bm{x}}{c}_p^j\Big)\cdot \bm{n}_s \mathrm{d}S_x+
    \frac{1}{|{\omega}_f|}\int_{\partial \omega} \Big({\bm{\mathfrak{D}}}_p({\bm{x}})\cdot\bm{\nabla}_{\bm{x}}{c}_p^j \Big)\cdot \bm{n}_p \mathrm{d}S_x 
    \\
    &\label{itstoolatetothink} \equiv  
-\frac{1}{\phi_p|{\omega}|}\frac{\partial{M}_p^j}{\partial {t}}+\frac{F}{\phi_p|{\omega}|}\, , 
\quad {\bm{X}}\in{\Omega}_p^j\, , \nonumber 
\end{align}
where $\bm{n}_f$ is the outward-facing unit normal to the fluid within a porous grain and where the last equality follows from the conditions~\eqref{BC_ad_sa} and~(\eqref{BC_F_intra}. Note that by taking the intrinsic average over each porous grain, $\langle{c}_p^j\rangle$ is now independent of ${\bm{x}}$. We define the \emph{extension} operator $ E[y^j]$ to be
\begin{equation}
    E[y^j]\defeq \left\{
    \begin{array}{lll}
       y^j\, , & \text{for } &\bm{X}\in\Omega_p^j\, , \quad j\in\{1,2,\cdots,N\}\, ,\\
       0\, , & \text{if }  &\bm{X}\in\Omega_i\, ,
    \end{array}
\right. 
\end{equation}
leading to the piecewise extended version of the intra-particle concentration field, $c_p$, the removal sink, $M_p$, and the mass flux, $F$, 
\begin{equation}
\label{c_p_defn}
     c_p(\bm{X},t)=E\left[\langle c_p^j\rangle\right]\, , \qquad M_p(\bm{X},t)=E\left[M_p^j\right]\, , \qquad F(\bm{X},t)=E\left[F_j\right]\, ,
\end{equation}
which are now defined over the whole domain $\Omega$.

To close the problem we must prescribe the behaviour of the removal ${M}_p$ and the mass flux $F$. This is carried out at throughout \S\ref{x-sec-ave-sect}.

\subsection{Model simplification}

\label{x-sec-ave-sect}
\begin{figure}[ht!]
    \centering    \includegraphics[width=.7\textwidth]{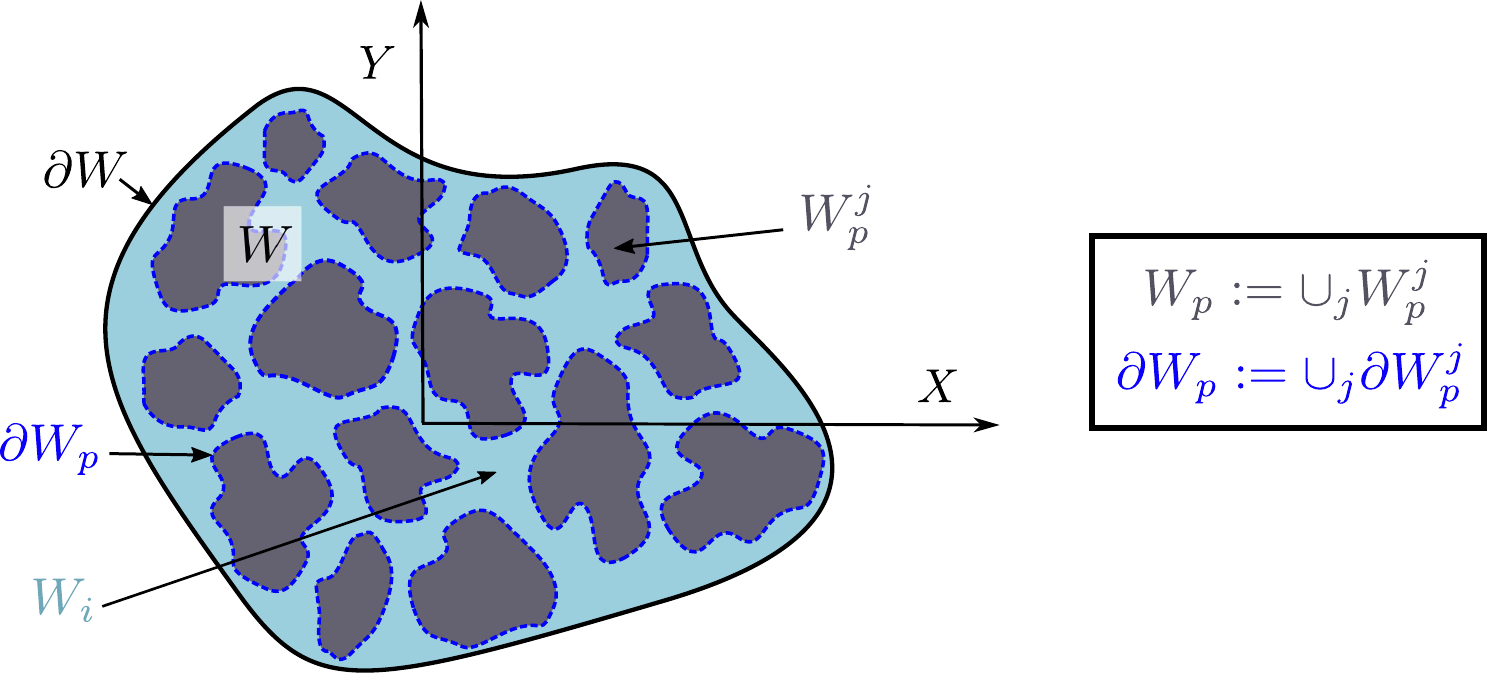}
    \caption{Cross-section of the filter spanned by the ${X}$-${Y}$ plane.}
    \label{cross-sec}
\end{figure}

The aim of this paper is to develop a simple model for for flow and adsorption in a dual-porosity material. 
Consequently,  we employ cross-sectional averaging over the macroscale which is inspired by an ensemble average \cite{Myer20a,Myer20b,Valverde2024}. This approach has previously shown good agreement with experimental data \cite{Valverde2024,Aguareles2023}. Mirroring the definitions of ${\Omega}$ we now define notation in an arbitrary two-dimensional cross-section of the filter  $Z=Z^\star\in\Omega$: the  inter-particle domain is denoted  ${W}_i(Z^\star)$, the union of the  distinct particle domains ( ${W}_p^j(Z^\star)$ ) 
is denoted ${W}_p(Z^\star)$, with the domain of the entire cross-section denoted ${W}$ with boundary $\partial W$ (Figure \ref{cross-sec}). Note that although ${W}_p$  and ${W}_i$ are functions of $Z^\star$, ${W}$  is independent of $Z^\star$ as the shape of the filter is constant. Note further that a consequence of the ensemble average is that $\phi\defeq |{\Omega}_i|/|{\Omega}| \equiv |{W}_i|/|{W}|$.  Finally, we define a set of integer numbers $A(Z^\star)\subset \{1,2,\dots, N\}$ where $j\in A(Z^\star)$ if and only if the plane determined by $Z=Z^\star$ intersects~$\partial\Omega_p^j$.  

Given any  function  $\triangle = \triangle({\bm{X}},{t})$  defined on ${\Omega}$ we define the cross-sectional average via 
\begin{equation}
\label{x-ave}
   \bar{\triangle}({Z},{t})\defeq \frac{1}{\phi|{W}|}\int_{{W}_i}\triangle({\bm{X}},{t})\ \mathrm{d}S_X  +  \frac{1}{(1-\phi)|{W}|}\int_{{W}_p}\triangle({\bm{X}},{t})\ \mathrm{d}S_X,
   \quad {Z}\in(0,L).
\end{equation}
where $\mathrm{d}S_X$ is a cross-sectional area element, with respect to $\bm{X}$. In particular, recall that ${c}\equiv 0$ for ${\bm{X}}\in{\Omega}_p$ 
so that 
\begin{subequations}
\begin{equation}
\bar{{c}}(Z,t) = \frac{1}{\phi|{W}|}\int_{{W}_i}c({\bm{X}},{t})\ \mathrm{d}S_X\, , \quad {Z}\in(0,L)\, . 
\end{equation}
Application of Equation~(\ref{x-ave}) to $c_p$, defined in Equation~(\ref{c_p_defn}), yields
\begin{equation}
\label{c_pbar_defn}
    \bar{c}_p(Z,t)   = \frac{1}{(1-\phi)|{W}|}\int_{{W}_p} c_p\ \mathrm{d}S_X  = \frac{1}{(1-\phi)|{W}|}\sum_{j\in A(Z)}\int_{{W}_{p}^j}\langle c_p^j \rangle \mathrm{d}S_X\, , \quad {Z}\in(0,L)\, .
\end{equation}
Similarly, we calculate the cross-sectional average mass flux between the inter- and intra-particle regions given by 
\begin{equation}
\label{F_bar_defn}
    \bar{F}(Z,t)   = \frac{1}{(1-\phi)|{W}|}\int_{{W}_p} F\ \mathrm{d}S_X\, , \quad {Z}\in(0,L)\, .
\end{equation}
Note that $\bar{c}(Z,t)$, $\bar{c}_p(Z,t)$ and $\bar{F}(Z,t)$, are now independent of $X,Y$ and $j$.  

At the fluid-fluid boundary we prescribe $F$ on $\partial{\omega}$ combining a mass balance and a Newton cooling law:
\begin{equation}
    \label{Fbar}
    {F}:= \int_{\partial \omega}  {k}_p(\bar{c}-\bar{c}_p)\ \mathrm{d} S_x \equiv |\partial \omega|{k}_p(\bar{c}-\bar{c}_p)\, ,
\end{equation}
where we have introduced the proportionality constant $k_p$ and the last equality follows  from the fact that $\bar{c}$ and $\bar{c}_p$ are independent of $\bm{x}$, by construction. Thus, Equation~(\ref{F_bar_defn}) simply yields 
\begin{equation}
\label{F_more}
    \bar{F}    \equiv F\, .
\end{equation}
 \end{subequations}
So that conservation of mass  requires 
\label{assumptions}
\begin{equation}
\label{assumption}
\int_{\partial{W}_{p}}\left({\bm{\mathfrak{D}}}^{X,Y}\cdot\bm{\nabla}^{X,Y}_{{\bm{X}}}{c}\right)\cdot\bm{n}_{p}^{X,Y}\mathrm{d}s_X\\  =  \frac{|W|(1-\phi)}{|{\omega}|}\bar{F},
\end{equation}
where
$\bm{n}_{p}^{X,Y}$ is the  normal to $W_p$, $\bm{\nabla}^{X,Y}_{{\bm{X}}}\defeq\left(\displaystyle\frac{\partial}{\partial {X}},\displaystyle\frac{\partial}{\partial {Y}}\right)^\intercal$ and $\mathrm{d}s_X$ is the scalar line element with respect to $\bm{X}$. 

\subsubsection{Inter-particle}
Taking the cross-sectional  average~\eqref{x-ave} of Equation~\eqref{itsgonemidnight} yields
\begin{subequations}
\begin{equation}
\label{eek}
    \frac{\partial \bar{c}}{\partial{t}}+{v}\frac{\partial \bar{c}}{\partial{Z}} = \frac{1}{\phi|{W}|}\int_{{W}_i}\left(\bm{\nabla}_{{\bm{X}}}^{X,Y}\cdot\left( {\bm{\mathfrak{D}}}^{X,Y}\cdot\bm{\nabla}^{X,Y}_{{\bm{X}}}{c} \right) +\frac{\partial}{\partial{Z}}\left(D\frac{\partial{c}}{\partial{Z}}\right)\right)\mathrm{d}S_X, \quad {Z}\in(0,L)\, ,
\end{equation}
where the left-hand side  is a consequence of   our assumption that changes
to the volume and porosity of the solid structure are negligible and where we have decomposed the diffusive flux into the in plane and out of plane components. 
Noting $\partial{W}_i \equiv \partial{W}\cup\left(-\partial{W}_p\right)$, where the minus sign is used to preserve orientation, application of the Divergence theorem  on Equation~\eqref{eek}  gives 
\begin{align}
\begin{split}\label{eeeek}
    \frac{\partial \bar{c}}{\partial{t}}+{v}\frac{\partial \bar{c}}{\partial{Z}}-D\frac{\partial^2 \bar{c}}{\partial{Z}^2}  =& 
    \frac{1}{\phi|{W}|}\left[\int_{\partial{W}}\left( {\bm{\mathfrak{D}}}^{X,Y}\cdot\bm{\nabla}^{X,Y}_{{\bm{X}}}{c} \right)\cdot\bm{n}_o^{X,Y}\mathrm{d}s_X\right. \\
    &-\left. \int_{\partial{W}_p}\left( {\bm{\mathfrak{D}}}^{X,Y}\cdot\bm{\nabla}^{X,Y}_{{\bm{X}}}{c} \right)\cdot\bm{n}_p^{X,Y}\mathrm{d}s_X\right], \quad {Z}\in(0,L)\, ,
\end{split}\end{align}
where we have once again used our assumption that changes
to the volume and porosity of the solid structure are negligible and where $\bm{n}_o^{X,Y}$ is the in plane normal to the outer edge of the cylinder.
\end{subequations}
As we assume that the filter is a cylinder of arbitrary cross-sectional shape, we have that the condition~\eqref{cylinderBC} is equivalent to 
\begin{equation}
\label{cylinderBCinplane}
\left({\bm{\mathfrak{D}}}^{X,Y}\cdot\bm{\nabla}_{{\bm{X}}}^{X,Y}{c}\right)\cdot\bm{n}_o^{X,Y}=0 \quad {\bm{X}}\in\partial{W}. 
\end{equation}
    Thus, applying~(\ref{cylinderBCinplane})~and~(\ref{assumption}) to Equation~(\ref{eeeek}) yields the cross-sectionally averaged equation for the inter-particle concentration 
    \begin{equation}
    \label{final_dim_c}
     \frac{\partial \bar{c}}{\partial{t}}+{v}\frac{\partial \bar{c}}{\partial{Z}}-D\frac{\partial^2 \bar{c}}{\partial{Z}^2} = -\frac{{k}_p(1-\phi)}{\phi}\frac{|\partial \omega|}{|\omega|}\left(\bar{c} - \bar{c}_p\right), \quad {Z}\in (0,L),
    \end{equation}
    where the quantity $ |\partial{\omega}|/|{\omega}| $ is known as the specific surface; in the case of spherical grains this is three times the inverse of the radius of the sphere. 

\subsubsection{Intra-particle}
\label{M_p_m_p}
Currently, the governing equations for $\langle{c}_p^j\rangle$ and $M_p^j$ are given in Equations~\eqref{intra_dev} and \eqref{BC_ad_sa}, respectively, for a single arbitrary grain, the location of which depends on ${\bm{X}}$; now we take a cross-sectional average of these equations to determine their behaviour uniformly over the desired one-dimensional domain ${Z}\in{\Omega}$ --- that is, $Z\in[0,L]$. Firstly, we define $\bar{m}_p$ with units moles$/$kg, to be the amount of adsorbate per unit mass of adsorbent,
\begin{equation}
\label{m_p_def}
\bar{m}_p(Z) \defeq \frac{(1-\phi)\bar{m}_p}{ \rho_b |\omega|}\equiv \frac{1}{\rho_b  |W| |\omega|}\sum_{j\in A(Z)}\int_{{W}_p^j}  {M}_p^j(\bm{X},t) \mathrm{d}S_X,  
\end{equation}
where $\rho_b$ is the bulk density and is defined to be the initial mass of the adsorbent divided by the total volume of the filter. We choose these units for $\bar{m}_p$ since this is a standard quantity use in experimental papers, \textit{e.g.,} \citep{Valverde2024,Borba2008,Dantas2011,Shaf15}. Consider
\begin{equation}
\begin{split}
\label{dim_cp}
\frac{\partial \bar{c}_p}{\partial t} =  \frac{1}{(1-\phi)|W|}&\sum_{j\in A(Z)}\int_{W_p^j}\frac{\partial \langle c_p^j\rangle}{\partial t}\ \mathrm{d}S_X  \\ 
&=  \frac{1}{(1-\phi)|W|}\sum_{j\in A(Z)}\int_{W_p^j}\left( -\frac{1}{\phi_p|\omega|}\frac{\partial M_p^j}{\partial t}+ \frac{F_j}{\phi_p|\omega|}\right) \mathrm{d}S_X \\ 
&=  -\frac{\rho_b}{\phi_p(1-\phi)}\frac{\partial \bar{m}_p}{\partial t} + \frac{{k}_p}{\phi_p}\frac{|\partial \omega|}{|\omega|} \left(\bar{c}-\bar{c}_p\right) \quad \text{for} \quad Z\in(0,L), 
\end{split}
\end{equation}
where we have used  Equation~\eqref{intra_dev}, the definitions~\eqref{c_pbar_defn} and~\eqref{m_p_def} of $\bar{c}_p$ and $\bar{m}_p$, and $\bar{F}$  as defined in Equations~(\ref{F_bar_defn} --\ref{F_more}), and the fact that $|\partial \omega|$, $|\omega|$, $\phi$, $\phi_p$ are constant.

Now, we fix the macroscale function which describes the removal; we use the Sips model given by
\begin{equation}
\label{SIPs}
   \frac{\partial\bar{m}_p}{\partial {t}} =  {k}_+  \bar{c}_p^a (\bar{m}_{\text{max}}-\bar{m}_p)^b - {k}_{-} \bar{m}_p^b\, , \quad {Z}\in(0,L)\, ,
\end{equation}
where $a$ and $b$ are integer exponents, ${k}_{+}$ and ${k}_{-}$, are the adsorption and desorption rate constants, respectively, and $\bar{m}_{\text{max}}$ is the maximum amount of adsorbate that can attach to the adsorbent surface per mass of adsorbent. As is standard in literature surrounding chemistry and chemical engineering, the units of ${k}_{+}$ and ${k}_{-}$ are not the same. Note that, when $a=b=1$ Equation~(\ref{SIPs}) reduces to the classical Langmuir-removal model. 

At equilibrium, the concentration $\bar{c}$ is the same as the feed (or inlet) concentration, and the adsorbed amount, denoted $\bar{m}_e$ is  
\begin{equation}
\label{Sipsiso}
\bar{m}_e\defeq\displaystyle\frac{\bar{m}_{max}}{1+\left(\mathcal{K}c_{in}^a\right)^{-1/b}}\, ,
\end{equation}
where $\mathcal{K}\defeq k_+/k_-$ is known as the equilibrium constant. Equation  (\ref{Sipsiso}) this is known as the Sips isotherm.

\subsection{Boundary conditions}
We take the standard boundary and initial conditions \cite{Myer20a, Myer20b, Myer22,Valverde2024, Aguareles2023};  conservation of flux at the inlet (Dankwert's condition), a passive outflow condition at the outlet, and initial conditions that the material is contaminant free:
\begin{subequations}
\label{BC_dim}
    \begin{align}
\label{BC_1}
    {v}\bar{c}(0,t)-D\left.\frac{\partial \bar{c}}{\partial Z}\right|_{Z=0}={v}c_{in}\, , \quad &\text{for all } t\, ,\\
\left.\frac{\partial \bar{c}}{\partial Z}\right|_{Z=L}=0\, , \quad &\text{for all }t\, ,\\
    \bar{c}=\bar{c_p}=\bar{m}_p=0\, , \quad &\text{at } t=0. 
\end{align}
\end{subequations}
The asymmetry between the inlet and outlet conditions is explored and justified in \citet{Pear59}. 

\subsection{Non-dimensionalisation}
Our system is modelled by the coupled system of partial differential equations formed by Equations~\eqref{final_dim_c}, \eqref{dim_cp}, and \eqref{SIPs}, and subject to the boundary conditions~\eqref{BC_dim}. We non-dimensionalise  via the scalings 
\begin{subequations}  
\label{scaling}
\begin{equation}
 \bar{c} = c_{in} C, \quad \bar{c}_p = c_{in} C_p,  
 \quad  \bar{m}_p= \bar{m}_e m_p, \quad  {Z}=\mathcal{L}\zeta, \quad \text{and}\quad {t} = \mathcal{T}\tau\, ,
\end{equation}
where 
$c_{in}$ is the inlet concentration, $\bar{m}_e$ is defined in Equation~\eqref{Sipsiso}, $\mathcal{L}$ is the reaction length scale and $\mathcal{T}$ is the reactive timescale. 
The reaction length scale  is obtained via balancing advection with the rate of  removal ($\partial m_p/\partial t$) in Equations~\eqref{final_dim_c} and \eqref{dim_cp} and the reactive timescale is obtained via balancing the rate of  removal with  adsorption in Equation~(\ref{SIPs}) giving 
\begin{equation}
    \mathcal{L}\defeq \frac{{v}\mathcal{T}\phi c_{in}}{\rho_b \bar{m}_e}\quad \text{with} \quad \mathcal{T}\defeq \frac{\bar{m}_e^{1-b}}{{k}_+c_{in}^a}\, . 
\end{equation}
 \end{subequations}
We  denote the dimensionless length of the filter as $l\defeq{{L}}/{\mathcal{L}}$.
Employing the scaling~(\ref{scaling}) on Equations~\eqref{final_dim_c}, \eqref{dim_cp}, \eqref{SIPs}, and the boundary conditions (\ref{BC_dim}) yields 
\begin{subequations}
\label{non-dim}
\begin{align}
\label{C_eqn}
    \mathrm{Da}\frac{\partial C}{\partial \tau} + \frac{\partial C}{\partial \zeta} - \mathrm{Pe}^{-1}\frac{\partial^2 C}{\partial \zeta^2} &= -\left(\alpha \frac{\partial C_p}{\partial \tau}+\frac{\partial m_p}{\partial \tau}\right)\, , \\
    \alpha\frac{\partial C_p}{\partial \tau} + \frac{\partial m_p}{\partial \tau} &= \beta\left(C - C_p\right)\, , \\ 
    \frac{\partial m_p }{\partial \tau}&=C_p^a\left(\mu-m_p\right)^b-(\mu-1)^b m_p^b\, ,
\end{align}\end{subequations}
subject to
\begin{subequations}
\label{non-dim-BC}
    \begin{align}
\label{BC_1a}
    C(0,\tau)-\Pe^{-1}\left.\frac{\partial C}{\partial \zeta}\right|_{\zeta=0}=1\, , \quad &\text{for all} \quad \tau>0\, , \\ \label{BC_2a}
\left.\frac{\partial C}{\partial \zeta}\right|_{\zeta=l}=0\, , \quad &\text{for all} \quad \tau>0\, ,\\
\label{IC_1}
    C(\zeta,0)=C_p(\zeta,0)=m_p(\zeta,0)=0\, , \quad &\text{for all} \quad \zeta\in(0,l)\, ,
\end{align}
\end{subequations}
where we have defined
\begin{align}\begin{split}
\label{mp_eqn}
    &\mathrm{Da}\defeq\frac{\mathcal{L}}{{v}\mathcal{T}}\, ,\quad \mathrm{Pe}\defeq \frac{{v}\mathcal{L}}{D}\, ,\quad \mu\defeq\frac{\bar{m}_\mathrm{max}}{\bar{m}_e}=1+\left(\mathcal{K}c_{in}^a\right)^{-1/b}\, ,\\ 
    &\alpha\defeq \frac{\phi_p\mathrm{Da}} {\varphi}\, ,\quad \beta\defeq{k}_p\frac{|\partial{\omega}|}{|{\omega}|} \frac{\mathcal{T}\mathrm{Da}}{\varphi}\, , \quad \text{with} \quad\varphi\defeq\frac{\phi}{1-\phi}\, .
\end{split}\end{align}
The parameters $\mathrm{Da}$ and $\mathrm{Pe}$ are the  Damk\"{o}hler  and P\'{e}clet numbers, respectively.

\subsection{Parameter values}
\label{sec:param}
The specific surface, $|\partial {\omega}|/|{\omega}|$, is the ratio of grain surface area to volume and satisfies that $|\partial {\omega}|/|{\omega}| \geq 3/R$, where we define $R$ to equal the shortest distance from the centre of a grain to the point on the surface that is furthest away from the centre (\textit{e.g.,} $R$ is the distance from the centre of a cube to one of the vertices). Note that this ratio is  minimised for a sphere. 

The coefficient of the rate of change of concentration within each porous grain, $\alpha$, can be thought of as the ratio of the flow timescale to the reaction timescale  within each porous grain, that is, an `internal' Damk\"{o}hler number. Provided the porous grains of adsorbent have a non-negligible fluid-fraction (\emph{i.e.}, $\phi_p\not\ll1$) and that the column is neither too packed ($1-\phi\not\ll1$) nor too empty (($\phi\not\ll1$)), then $\alpha=\ord{\mathrm{Da}}$. Alternatively, using the definition of $\mathcal{L}$ and $\mathrm{Da}$, $\alpha\equiv\phi_p(1-\phi)c_{in}/(\rho_b\bar{m}_e)$ is the ratio of the maximum amount of contaminant that can occupy all porous particles to the amount that the porous particles can adsorb per unit length. 

Experimental data (\textit{e.g.,} \cite{Aguareles2023,Myer22,Valverde2024,Myer20b,Myer20a, Sulaymon2014, Goeppert2014, Yousif2013}) suggests the following ranges for parameters: 
  $\phi_p, \phi\in(0.2, 0.8)$ which gives $\varphi\in(1/4,4) = \mathcal{O}(1)$; $\mathrm{Pe}^{-1}\in  (10^{-4}, 10^{-1})$; and $\alpha\sim\mathrm{Da}\in(10^{-6}, 10^{-3})$. Note that, in most cases  $\alpha,\Da\ll\mathrm{Pe}^{-1}\ll1$.
In practical situations, desorption should be at most comparable to adsorption, thus in general we expect $\mathcal{K}=k_+/k_-\gg1$. Since we have $\mu=1+/(\mathcal{K}c_{in}^a)^{1/b}$, this suggests that we expect $\mu=\ord{1}$ (with $\mu>1$).


The parameter $\beta$ results from scaling the Newton Cooling condition \eqref{Fbar} with the removal inside porous grains. It can be  thought of a ratio of timescales, in particular the macroscale reactive timescale $\mathcal{T}$ over the timescale 
\begin{equation}
    \mathcal{T}_\beta\defeq \displaystyle\frac{\rho_b \bar{m}_e|{\omega}|}{c_{in}{k}_p(1-\phi)\left|\partial{\omega}\right|}\, ,
\end{equation} 
associated with the entry of the contaminant into a porous grain and its subsequent removal via adsorption. In what follows, we analyse the effects of varying $\beta$ on the effective performance of the filter.

\begin{description}
    \item[$\beta\gg1$:] in this regime contaminant almost instantly diffuses into the particle once it reaches the porous grain. It is possible to use effective parameters to obtain a governing set of equations that are structurally identical to the classical models. This allows describing the whole dual porosity medium by an equivalent single porosity medium with an effective porosity at leading order:
    \begin{subequations}
    \begin{align}
    \breve{\mathrm{Da}}\frac{\partial C}{\partial \tau} + \frac{\partial C}{\partial \zeta} - \mathrm{Pe}^{-1}\frac{\partial^2 C}{\partial \zeta^2} &= -\frac{\partial m_p}{\partial \tau}, \\
    \frac{\partial m_p }{\partial \tau}&=C^a\left(\mu-m_p\right)^b-(\mu-1)^b (m_p)^b,
\end{align}\end{subequations}
where $ \breve{\mathrm{Da}}\defeq \alpha+\mathrm{Da}$.
Thus, in this limit we replace the  `inter-particle' porosity $\phi$ with an effective porosity $\breve{\phi}$ given by $\breve{\phi}\defeq \phi(1-\phi_p)+\phi_p$.
Note that, $\breve{\phi}\leq1$ because $(1-\phi_p)+\phi_p =1$ and $\phi\leq1$. This is often referred to as a model with reservoir effects \cite{royer2012time}. To exactly reproduce the classical model (with $\Phi = \phi$) we also enforce that we have impermeable  grains; thus, $\phi_p \to 0 \implies \alpha \to 0 $  so that $\breve{\mathrm{Da}}\to\mathrm{Da}$. This limiting process corresponds to converting the total adsorption internally within each grain to an effective removal on the surface of the porous grains. 
    \item[$\beta\sim1$:] In this limit the diffusion within the porous grain balances the rate at which contaminant advects to the surface of the porous grain in the interparticle domain. In this dual porosity limit the interplay between the inter- and intra-particle regions is crucial. 
    
    \item[$\beta\ll1$:]
    In this limit we have a very ineffective filter. At leading order we reproduce the single porosity limit \cite{royer2012time} but with no removal
    \begin{equation}
        \mathrm{Da}\frac{\partial C}{\partial \tau} + \frac{\partial C}{\partial \zeta} - \mathrm{Pe}^{-1}\frac{\partial^2 C}{\partial \zeta^2} = 0\, . 
    \end{equation}
This essentially corresponds to when the time taken to diffuse contaminant within the porous particles is much slower than the rate at which contaminant is advected to the porous grains. We also note that $\beta$ strongly depends on the porosity of the bulk material and on the specific surface of the grains. In particular, a large porosity ($\varphi$ large) and a small specific surface produce a small value of $\beta$.
\end{description}

Given that the limit $\beta \ll 1$ reduces to the classical model, and  $\beta \gg 1$ corresponds to negligible removal, in what follows we focus on the limit $\beta\sim1$.

\section{Solution methods}\label{sec:solutionmethods}


\subsection{Numerical solution of the full system of partial differential equations}
In general, we solve the initial-boundary value problem (henceforth IBVP)~(\ref{non-dim},\ref{non-dim-BC}) numerically. We tackle this  IBVP  via a  combination of the method of lines and Chebyshev spectral collocation (\textit{i.e.}, Chebyshev pseudospectral methods) \citep{piche2007solving, trefethen2000spectral, bjornaraa2013pseudospectral, auton2018arteries}. That is, we discretise the spatial domain into the $N$ Chebyshev points and then approximate spatial derivatives using a dense  Chebyshev differentiation matrix \cite{auton2017arteries}; this approach allows for simple incorporation of the boundary conditions into the Chebyshev pseudospectral differentiation matrix. We subsequently  integrate the resulting system of differential algebraic equations (DAEs) in time with \verb|MATLAB|${\textsuperscript{\textregistered}}$ using \verb|ode15s| \citep{piche2009matrix, auton2018arteries}.  The Chebyshev pseudospectral method has proved to be more robust than classical finite volumes or finite differences; this is because a change at any collocation point  affects every other collocation point, while a change in a simple finite differences discretisation will only affect its neighbours. More detail on this approach  is given in \ref{Cheb_Num}. As we will discuss in \S\ref{TW}, due to the form of the full numerical solution,  we make a travelling wave assumption to reduce the problem to a system of first order ODEs with a boundary condition and an internal constraint for  which we also solve via  Chebyshev spectral collocation (see \S \ref{Num.sol.TW}).

\subsection{Steady-state solutions}
\label{sec:statSol}

At steady state, the system~(\ref{non-dim}) becomes
\begin{subequations}
\begin{align}
\frac{\partial C}{\partial \zeta}-\Pe^{-1}\frac{\partial^2C}{\partial \zeta^2} &= 0,
\\
C&=C_p,\\
\left(C_p\right)^a(\mu-m_p)^b-(\mu-1)^b(m_p)^b&=0,
\end{align}\end{subequations}
subject to the boundary conditions~(\ref{non-dim-BC}a,b). Hence the steady state solution is  simply
\begin{equation}\label{steadystatesoln}
    C(\zeta,\tau) = C_p(\zeta,\tau) = m_p(\zeta,\tau) \equiv 1.
\end{equation}
This trivial solution will be used in \S\ref{TW} as a boundary condition for a travelling wave approximation.

\subsection{Early time solution}
\label{sec:early}

Here, we investigate the form of the solutions at early times. Early time solutions are often useful to determine a starting point for a numerical scheme, particularly when the exact initial condition is discontinuous. Our simulations do not suffer from this problem, in which case the early time solution has been used to verify the numerical solution close to the initial condition.

We take $T=\tau/\delta$ and note that for small time we expect $\ord{1}$ changes in $\mathcal{C}$ and $\mathcal{C}_p$ but only $\mathcal{O}(\delta)$ changes in $m_p$.  That is, in a small time, we expect the filter to use up only a small fraction of its available adsorption sites which implies that  $m_p$ scales like $\delta\mathcal{M}$. Thus, we take $\mathcal{M}=m_p/\delta$ and the system~\eqref{non-dim} becomes
\begin{subequations}
\label{rescaled}
    \begin{align}
    \frac{\Da}{\delta}\frac{\partial C}{\partial T} + \frac{\partial C}{\partial \zeta} - \mathrm{Pe}^{-1}\frac{\partial^2 C}{\partial \zeta^2} &= -\left(\frac{\partial \mathcal{M}}{\partial T}+\frac{\alpha}{\delta}\frac{\partial C_p}{\partial T}\right), \\
    \frac{\alpha}{\delta}\frac{\partial C_p}{\partial T} + \frac{\partial \mathcal{M}}{\partial T} &= \beta\left(C - C_p\right), \\ 
  \frac{\partial \mathcal{M}}{\partial T}&=\left(C_p\right)^a\left(\mu-\delta\mathcal{M}\right)^b-(\mu-1)^b \left(\delta\mathcal{M}\right)^b,
\end{align}
\end{subequations}
subject to the boundary and initial conditions~\eqref{non-dim-BC}. Then, for any $\delta$ satisfying $\Da\ll\delta\ll 1$ and $\Pe\ll 1$ the leading order terms ($C^{(0)}$, $\mathcal{M}^{(0)}$ and $C_p^{(0)}$) are found to satisfy
\begin{subequations}\label{smalltime_eqs}
    \begin{align}
        \frac{\partial C^{(0)}}{\partial \zeta}  &= -\frac{\partial \mathcal{M}^{(0)}}{\partial T}\, ,\\
        \frac{\partial \mathcal{M}^{(0)}}{\partial T} &= \beta\left(C^{(0)} - C_p^{(0)}\right)\, ,\\
        \frac{\partial \mathcal{M}^{(0)}}{\partial T}&=\mu^b\left(C_p^{(0)}\right)^a\, ,
    \end{align}
\end{subequations}
subject to 
\begin{align}\label{smalltime_BCs}
    \text{(i)} \quad C^{(0)}(0,T)=1\, ,\quad \text{and} \quad \text{(ii)} \quad\left.\frac{\partial C^{(0)}}{\partial \zeta}\right|_{\zeta=l}=0\, .
\end{align}
Rearranging the early-time system~(\ref{smalltime_eqs}) gives
\begin{subequations}
\label{34}
\begin{align}
    &\frac{\partial C_p^{(0)} }{\partial\zeta}\left(1 + \frac{\mu^b}{\beta}a (C_p^{(0)})^{a-1}\right)=-C_p^{(0)}\mu^b,\label{eq:cp0}\\
    & C^{(0)} = \left(1+\frac{\mu^b}{\beta}\right)C_p^{(0)},\label{eq:c0}\\
    & \frac{\partial\mathcal{M}^{(0)}}{\partial T}=(C_p^{(0)})^a\mu^b.
\end{align}    
\end{subequations}
Equation \eqref{eq:cp0} can be solved explicitly, which determines $C_p^{(0)}$ and $C^{(0)}$ via Equation~\eqref{eq:c0}. 

For the case where $a=1$,  the general solution to the problem~(\ref{smalltime_BCs},\ref{34})  is 
\begin{equation}
    C_p^{(0)}(\zeta,T)= \varphi(T) e^{-\beta\mu^b\zeta/(\beta+\mu^b)}\, ,\qquad C^{(0)}= \frac{\beta+\mu^b}{\beta}\varphi(T)e^{-\beta\mu^b\zeta/(\beta+\mu^b)}\, .
\end{equation}
Note that there is only one free parameter, the time-dependent function $\varphi(T)$, however,  $C(\zeta, T)$  is subject to the two boundary conditions~\eqref{smalltime_BCs}. Thus, it is not possible to find a function $C(\zeta, T)$ that satisfies both boundary conditions; it is natural to choose the inlet condition~(\ref{smalltime_BCs}.i) since the process is driven by the input at $\zeta =0$. Further, imposing the boundary condition~(\ref{smalltime_BCs}.ii) would yield the trivial solution $C^{(0)}= C_p^{(0)}=\mathcal{M}^{(0)}\equiv 0$. Therefore, the solution of Equation~\eqref{smalltime_eqs} subject to~(\ref{smalltime_BCs}.i) gives
\begin{equation}
     C_p^{(0)}  =\frac{\beta}{\beta+\mu^b}  e^{-\beta\mu^b\zeta/(\beta+\mu^b)}\, , \quad
      C^{(0)}  = e^{-\beta\mu^b\zeta/(\beta+\mu^b)}\, ,\quad
      \mathcal{M}^{(0)}  = \frac{\beta T}{\beta+\mu^b}  e^{-\beta\mu^b\zeta/(\beta+\mu^b)}\, .
\end{equation}
We note that ,
\begin{equation}
\label{eq:bcInf}
\lim_{\zeta\to\infty} C^{(0)}=\lim_{\zeta\to\infty} \pad{C^{(0)}}{\zeta} =0\, .    
\end{equation}
Therefore, in an infinite filter the boundary condition at the outlet would also be satisfied. Further, Equation~\eqref{eq:cp0} shows that for any value of $a\geq 1$ smooth solutions satisfy conditions~\eqref{eq:bcInf}. In a filter of finite length the boundary condition at the outlet is satisfied by means of a trivial boundary layer (that is $C^{(0)}=$constant) of length $\delta$.

\subsection{Travelling wave approximation}
\label{TW}
The numerical solution to the IBVP~(\ref{non-dim},\ref{non-dim-BC})  implies that for certain parameter regimes and sufficiently large time, $C$, $C_p$ and $m_p$ behave like  travelling waves (see Figures \ref{fig:numerics_vs_TW1} and \ref{fig:numerics_vs_TW2}, and related discussion in \S\ref{sec:num_plots}). Thus, we introduce a travelling wave coordinate
\begin{equation}
\label{eq:TW-transf}
    \eta\defeq \zeta-S(\tau)\, . 
\end{equation}
We fix the location of the wavefront via 
\begin{equation}\label{TW:FrontPositionDef}
    S(\tau) = \dot{S}(\tau-\tau_{1/2})+l\, ,
\end{equation}
where  the wave velocity $\dot{S} \defeq \dd S/\dd\tau$ is a constant  to be determined during the solution process and $\tau_{1/2}$ is the (dimensionless) time at which the concentration at the outlet is exactly half that at the inlet (thus termed halftime), that is,  
\begin{equation}
\label{eq:C1/2}
    C(l,\tau_{1/2})=\frac{1}{2}\, .  
\end{equation}
We define 
\begin{equation}
    C(\zeta,\tau)\defeq\mathcal{C}(\eta), \quad  C_p(\zeta,\tau)\defeq\mathcal{C}_p(\eta),\quad \text{and}\quad m_p(\zeta,\tau)\defeq\mathcal{M}_p(\eta),
\end{equation}
so that the sytem~\eqref{non-dim} becomes 
\begin{subequations}
\label{eq:TW}
\begin{align}
\label{29a}
    \left(1-\Da\dot{S}\right) \frac{\dd\mathcal{C}}{\dd\eta} &= \Pe^{-1} \frac{\dd^2\mathcal{C}}{\dd\eta^2}+\alpha\dot{S}\frac{\dd\mathcal{C}_p}{\dd\eta}+\dot{S}\frac{\dd\mathcal{M}_p}{\dd\eta}\,\\
    \label{29b}
    \beta\left(\mathcal{C}-\mathcal{C}_p\right) &= -\dot{S}\left(\alpha\frac{\dd\mathcal{C}_p}{\dd\eta}+ \frac{\dd\mathcal{M}_p}{\dd\eta}\right)\, ,\\
    \label{29c}
    -\dot{S} \frac{\dd\mathcal{M}_p}{\dd\eta} &= \mathcal{C}_p^a\left(\mu-\mathcal{M}_p\right)^b-(\mu-1)^b\mathcal{M}_p^b\, ,
    \end{align}
\end{subequations}
and where the condition~\eqref{eq:C1/2} fixes 
\begin{equation}
   \label{eq:C1/2b}
   \mathcal{C}(0)=1/2.
\end{equation}
For the travelling wave approximation we need to consider filters of an infinite length. We thus need to impose boundary conditions at plus and minus infinity. Due to the transformation~\eqref{eq:TW-transf}, the limit $\eta\to-\infty$ corresponds to $\tau\to+\infty$
hence the steady state solution~\eqref{steadystatesoln} holds and we impose 
\begin{subequations}\label{eq:BCpluminusInf}
    \begin{equation}
    \label{eq:BCminusInf}
        \lim_{\eta\to-\infty} \mathcal{C}=\lim_{\eta\to-\infty} \mathcal{C}_p =\lim_{\eta\to-\infty}\mathcal{M}_p = 1\, , \quad \lim_{\eta\to-\infty} \frac{\dd\mathcal{C}}{\dd\eta}=0. 
    \end{equation}
These conditions are a natural choice since they correspond to a steady flow of pollutant sufficiently far upstream and that the filter is fully saturated there.  
 As $\eta\to\infty$ one expects the fluid to be clean and the filter fresh, thus we impose 
    \begin{equation}
    \label{eq:BCplusInf}
        \lim_{\eta\to\infty} \mathcal{C}=\lim_{\eta\to\infty} \mathcal{C}_p =\lim_{\eta\to\infty}\mathcal{M}_p = \lim_{\eta\to\infty} \frac{\dd\mathcal{C}}{\dd\eta}=0. 
    \end{equation}
      \end{subequations}

By integrating Equation~\eqref{29a} and imposing the conditions~\eqref{eq:BCplusInf} in the limit $\eta\to\infty$ ,  we find
\begin{subequations}
\begin{equation}
\label{11storder}
\left(1-\Da\dot{S}\right) \mathcal{C} = \Pe^{-1} \nd{\mathcal{C}}{\eta}+\dot{S}(\alpha\mathcal{C}_p+\mathcal{M}_p).    
\end{equation}
Imposition of the conditions~\eqref{eq:BCminusInf} determines the wave velocity $\dot{S}$:
\begin{equation}
    \label{eq:vel}
    \dot{S} =\frac{1}{1+\alpha+\Da}.
\end{equation}
\end{subequations}

In general, we solve Equations~\eqref{29b}, \eqref{eq:C1/2b}, and  \eqref{11storder} numerically, assuming that $\eta\in[-B, B]$, where $B$ is some large number, subject to $\mathcal{M}_p(B) = 0 $. Details of the numerical scheme are given in \S\ref{Num.sol.TW}. However, by taking advantage of the small parameters in the system, for certain combinations of $a$ and $b$, we are able to determine approximate analytical solutions.
Recall that for the physically relevant parameter regimes of the present study 
 $\alpha, \mathrm{Da}, \Pe^{-1} \ll1$. Neglecting  these terms, Equation~(\ref{eq:vel})   simply becomes  $\dot{S}=1$. 
 Further, Equation~(\ref{11storder}) simply yields 
 $\mathcal{C}=\mathcal{M}_p$. Thus Equations~\eqref{29b}, \eqref{eq:C1/2b}, and  \eqref{11storder} reduce to a first order ordinary differential equation for $\mathcal{C}(\eta)$ 
 while $\mathcal{C}_p(\mathcal{C})$ is determined implicitly via Equation~(\ref{29c})
\begin{subequations}
\label{eq:C}
    \begin{align}
        \label{30b}
        \frac{\dd \mathcal{C}}{\dd\eta} &= -\beta\left(\mathcal{C}-\mathcal{C}_p(\mathcal{C})\right) , \\
        \label{30c}
        \beta\left(\mathcal{C}-\mathcal{C}_p(\mathcal{C})\right)  &= \mathcal{C}_p^a(\mathcal{C})\left(\mu-\mathcal{C}\right)^b-(\mu-1)^b\mathcal{C}^b \, .
    \end{align}
    \end{subequations}
The following proposition establishes the conditions under which the system~\eqref{eq:C} provides a unique solution satisfying~\eqref{eq:C1/2b} and \eqref{eq:BCpluminusInf}, that is
\begin{equation}
\label{eq:CBC}
    \mathcal{C}(0)=1/2\, ,\quad \lim_{\eta\to -\infty} \mathcal{C} = 1\, , \quad \lim_{\eta\to \infty} \mathcal{C} = 0\, .
\end{equation}
\begin{prop}\label{prop:TW} Given $\mu,\beta, c_0\in\mathbb{R}$ such that $\mu>1$, $\beta>0$, and $0<c_0<1$, for any $a,b\in \mathbb{N}$ such that $a\leq b$, the initial value problem given by Equation \eqref{30b} along with the initial condition
\begin{equation}
    \label{eq:CCpIC2}
     \mathcal{C}(0) = c_0\in(0,1)\, ,
\end{equation}
where $\mathcal{C}_p(\mathcal{C})$ is implicitly determined by \eqref{30c}
is well posed and it has a unique decreasing solution, $\mathcal{C}(\eta)$ satisfying the conditions at plus and minus infinity provided in \eqref{eq:CBC}.
\end{prop}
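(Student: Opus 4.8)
The plan is to treat Equation~\eqref{30b} as a scalar autonomous ODE $\dd\mathcal{C}/\dd\eta = -\beta\,G(\mathcal{C})$, where $G(\mathcal{C}) \defeq \mathcal{C} - \mathcal{C}_p(\mathcal{C})$ and $\mathcal{C}_p(\mathcal{C})$ is the function defined implicitly by~\eqref{30c}. The entire argument hinges on first establishing that this implicit function is well defined: that is, for each $\mathcal{C}\in[0,1]$ there is a unique $\mathcal{C}_p\in[0,1]$ solving~\eqref{30c}, and that $\mathcal{C}_p$ depends smoothly on $\mathcal{C}$. To this end I would define $H(\mathcal{C}_p;\mathcal{C})\defeq \beta(\mathcal{C}-\mathcal{C}_p) - \mathcal{C}_p^a(\mu-\mathcal{C})^b + (\mu-1)^b\mathcal{C}^b$ and study it as a function of $\mathcal{C}_p$ on $[0,1]$ for fixed $\mathcal{C}$. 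Since $\partial H/\partial\mathcal{C}_p = -\beta - a\,\mathcal{C}_p^{a-1}(\mu-\mathcal{C})^b < 0$ (using $\mu>1$, $\beta>0$, $\mathcal{C}\in(0,1)$ so $\mu-\mathcal{C}>0$), $H$ is strictly decreasing in $\mathcal{C}_p$; checking the signs of $H$ at the endpoints $\mathcal{C}_p=0$ and $\mathcal{C}_p=\mathcal{C}$ then yields exactly one root in the interior, and the implicit function theorem gives smoothness of $\mathcal{C}_p(\mathcal{C})$.

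**Sign and monotonicity of the vector field.**
Next I would show that $G(\mathcal{C}) = \mathcal{C} - \mathcal{C}_p(\mathcal{C}) > 0$ for $\mathcal{C}\in(0,1)$, which forces $\dd\mathcal{C}/\dd\eta < 0$ and hence a strictly decreasing solution. From~\eqref{30c}, $\beta G(\mathcal{C}) = \mathcal{C}_p^a(\mu-\mathcal{C})^b - (\mu-1)^b\mathcal{C}^b$, so the sign of $G$ is the sign of the right-hand side. The hypothesis $a\leq b$ should enter precisely here: I expect to compare $\mathcal{C}_p^a(\mu-\mathcal{C})^b$ against $(\mu-1)^b\mathcal{C}^b$ and use $\mathcal{C}_p<\mathcal{C}$ together with $a\leq b$ to pin down the sign. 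I would also verify the two fixed points: $G(0)=0$ and $G(1)=0$ (at $\mathcal{C}=1$ one checks $\mathcal{C}_p=1$ solves~\eqref{30c}, since the right-hand side becomes $(\mu-1)^b-(\mu-1)^b=0$), so that $\mathcal{C}=0$ and $\mathcal{C}=1$ are the two equilibria of the autonomous ODE.

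**Existence, uniqueness, and the limits at $\pm\infty$.**
With $G$ smooth (hence locally Lipschitz) on $(0,1)$ and strictly positive there, Picard--Lindel\"of gives a unique local solution through the initial condition~\eqref{eq:CCpIC2} with $c_0=1/2$, and the monotonicity $\dd\mathcal{C}/\dd\eta<0$ confines the trajectory to $(0,1)$ for all $\eta$ (the solution cannot cross the equilibria at $0$ and $1$ by uniqueness). A bounded, strictly monotone solution on all of $\mathbb{R}$ must converge to limits as $\eta\to\pm\infty$, and those limits must be equilibria of the vector field; since the solution decreases from $c_0\in(0,1)$, the only admissible limits are $\lim_{\eta\to-\infty}\mathcal{C}=1$ and $\lim_{\eta\to+\infty}\mathcal{C}=0$, which is exactly~\eqref{eq:CBC}. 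Completeness of the trajectory on all of $\mathbb{R}$ follows because $G\to 0$ at both equilibria, so the solution approaches them only asymptotically and never in finite $\eta$; I would confirm this by estimating $G$ near the equilibria to rule out finite-time blow-up or finite-time arrival.

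**Main obstacle.**
I expect the delicate step to be the rigorous sign analysis of $G$ on $(0,1)$, and in particular isolating exactly where and how the hypothesis $a\leq b$ is needed. The coupling through the implicitly defined $\mathcal{C}_p(\mathcal{C})$ means one cannot simply read off the sign of $\mathcal{C}_p^a(\mu-\mathcal{C})^b - (\mu-1)^b\mathcal{C}^b$; instead I would likely argue by contradiction, supposing $G(\mathcal{C}_\ast)\leq 0$ at some interior point and deriving an inequality that, combined with $a\leq b$ and $\mu>1$, is inconsistent with $\mathcal{C}_p<\mathcal{C}$. A secondary technical point is confirming the nondegeneracy of the equilibria so that the limits are attained only asymptotically, guaranteeing the solution extends to all of $\mathbb{R}$ rather than reaching an endpoint in finite travelling-wave coordinate.
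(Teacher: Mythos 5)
Your proposal is correct and takes essentially the same route as the paper: your single monotone function $H$ with sign checks at $\mathcal{C}_p=0$ and $\mathcal{C}_p=\mathcal{C}$ is precisely the paper's preliminary lemma (establishing both that $\mathcal{C}_p(\mathcal{C})$ is well defined and that $0<\mathcal{C}_p(\mathcal{C})<\mathcal{C}$), and your phase-line argument (equilibria only at $0$ and $1$, strict decrease, asymptotic limits, no finite-time arrival) is the paper's proof of the proposition, stated there with less dynamical detail. The ``main obstacle'' you anticipate is in fact already dispatched by your own endpoint check, with no contradiction argument needed: $H(\mathcal{C};\mathcal{C}) = \mathcal{C}^a(\mu-\mathcal{C})^b\left[\mathcal{C}^{b-a}\left(\tfrac{\mu-1}{\mu-\mathcal{C}}\right)^b - 1\right]<0$ for $\mathcal{C}\in(0,1)$ exactly because $a\leq b$ makes $\mathcal{C}^{b-a}\leq 1$ while $\tfrac{\mu-1}{\mu-\mathcal{C}}<1$, which simultaneously gives $0<\mathcal{C}_p<\mathcal{C}$, hence $G>0$ on $(0,1)$ — this is the same computation by which the paper's lemma uses the hypothesis $a\le b$.
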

The proof of this proposition is provided in~\ref{proofs}.

Proposition~\ref{prop:TW} states the existence of travelling waves as solutions of the reduced system~\eqref{eq:C} when $a\leq b$, but further analysis is needed to study cases where $a>b$. However, $a\leq b$ is the physically relevant case in adsorption settings. The physical reaction associated with an $(a,b)$ integer pair is 
\begin{equation}
\label{chemistry}
    a\mathcal{A} + b\mathcal{B} \xrightleftharpoons[k_-]{k_+} \text{Adsorbate}\, ,
\end{equation}
where $\mathcal{A}$ and $\mathcal{B}$ refer to the pollutant molecules and adsorbent sites, respectively. Although the case $a>b$ may appear an advantageous situation (each adsorbent site $\mathcal{B}$ is capable of capturing $a/b$ pollutant molecules $\mathcal{A}$), the equilibrium of this reaction provides a very slow increase of the adsorbed fraction at equilibrium $\bar{m}_e$ with respect to the inlet concentration. Thus, even high concentrations result in a scarce adsorption capacity. The cases in which $a>b$, provide convex isotherms that are regarded as unfavourable adsorption cases \cite{McCabe1993}.

In what follows we restrict ourselves to $a\leq b$. Analytical solutions can be determined for certain values of $a$ and $b$. When such solutions can be found, they may be determined via separation of variables in Equation~\eqref{30b}, which yields
\begin{equation}
     \label{integral_a=1}
     -\beta\eta = \int_{1/2}^\mathcal{C}  \frac{\dd u}{u-\mathcal{C}_p(u)} = \int_{1/2}^\mathcal{C}  f(u) \dd u=F(\mathcal{C}) - F\left(1/2\right)\, ,
     \end{equation}
where $F$ is defined by
\begin{equation}
    \label{eq:FC}
 F(\mathcal{C})=   \int_{0}^\mathcal{C}  f(u) \dd u\, ,
\end{equation}
and depends on the Sips exponents, $a$ and $b$. 
In what follows we consider some specific combinations of $(a,b)$ that are commonly found in the adsorption literature \cite{Aguareles2023}.
\begin{table}
    \centering
\begin{align*}
\\
\begin{array}{|c|l|}
    \hline
     &\\
   a=1 \, &F(\mathcal{C}) \\
      &\\
    \hline\hline\hline
      &\\
    b=1 &(\beta+\mu)\ln(\mathcal{C})+ (1-\beta-\mu)\ln(1-\mathcal{C}) \\
      &\\
    \hline
      &\\
    b=2 &\left(\displaystyle\frac{\beta}{\mu^2}+1\right)\ln(\mathcal{C})- \displaystyle\frac{\beta+(\mu-1)^2}{\mu^2-1}\ln(1-\mathcal{C})+\displaystyle\frac{\beta+\mu^2(\mu-1)^2}{\mu^2(\mu^2-1)}\ln(\mu^2-\mathcal{C}) \\
      &\\
       \hline
     &\\
    b=3,\ \mu\in(1,4)& \left(\displaystyle\frac{\beta}{\mu^3}+1\right)\ln(\mathcal{C})-A_1\ln(1-\mathcal{C}) \, \, 
    \\
    &  \\
    & \hspace{1.7cm} +A_2\ln\left[(\mathcal{C}-q_1)^2+q_2^2\right]  -  \displaystyle\frac{A_3}{q_2}\arctan\left(\displaystyle\frac{\mathcal{C}-q_1}{q_2}\right)\\
     &\\
        & \text{where, $q_1$ and $q_2$ are defined in Equation~(\ref{q_i}) and where } \\
      &\\
        &A_1\defeq 
       \displaystyle\frac{\beta+(\mu-1)^3}{(\mu-1)^2(2\mu+1)}, \ \ 
        A_2\defeq 
        \displaystyle\frac{\mu^3(\mu-1)^3+\beta(2q_1-1)}{2\mu^3(\mu-1)^2(2\mu+1)}\\
         &\\
         & \text{and}\quad A_3\defeq \displaystyle\frac{\beta[3-\mu(4-\mu)(\mu-1)]+3(\mu-1)^3\mu^3}{2\mu(\mu-1)(2\mu+1)}\\
         & \\
    \hline
    & \\
    b=3, \ \mu=4 & 
    \displaystyle\frac{\beta+64}{64}\ln(\mathcal{C})-\displaystyle\frac{\beta+27}{81}\ln(1-\mathcal{C})+\displaystyle\frac{1728-17\beta}{5184}\ln(\mathcal{C}+8)+\displaystyle\frac{\beta+1728}{72(\mathcal{C}+8)} \\ 
     & \\ 
      \hline
      & \\
    b=3, \ \mu>4 &  \left(\displaystyle\frac{\beta}{\mu^3}+1\right)\ln(\mathcal{C}) - A_1\ln(1-\mathcal{C})+\left[
   \displaystyle\frac{\beta +(\mu-n_+)^3}{n_+(1-n_+)(n_+-n_-)}
   \right]\ln(\mathcal{C}-n_+)  \\
        & \\
     & \hspace{3.6cm} +  \left[
    \displaystyle\frac{\beta +(\mu-n_-)^3}{n_-(1-n_-)(n_--n_+)}
    \right]\ln(\mathcal{C}-n_-)
    \\
     & \\
      & \text{where $A_1$
      is defined above, in the case $b=3$, $\mu\in(1,4)$, and where $n_\pm$ are}\\
       & \text{defined in Equation~\eqref{n_i}.}\\ 
        &\\ 
     \hline
    \end{array}
\end{align*}
    \caption{Summary of expressions for $ F(\mathcal{C})$ defined in Equation~\eqref{eq:FC}, for different values of $a,b$.}
    \label{tab:F}
\end{table}

\subsubsection{Explicit solutions for typical parameter values}
\label{sec:exp}
When $a=1$, Equation~(\ref{30c}) can be rearranged as follows
\begin{equation}
\label{Cp}
    \mathcal{C}_p = \frac{(\mu-1)^b\mathcal{C}^b+\beta\mathcal{C}}{\beta+(\mu-\mathcal{C})^b}, 
\end{equation}
which we use to write Equation~\eqref{30b} as
\begin{equation}
\label{eq:C1}
    \frac{\dd\mathcal{C}}{\dd\eta}=-\beta \frac{\mathcal{C}(\mu-\mathcal{C})^b-(\mu-1)^b\mathcal{C}^b}{\beta+(\mu-\mathcal{C})^b}\, ,
\end{equation}
with implicit solution
\begin{equation}
     \label{eq:sol_a=1}
     -\beta\eta = \int_{1/2}^\mathcal{C} \frac{\beta+(\mu-u)^b}{u(\mu-u)^b-(\mu-1)^b u^b}\dd u=F(\mathcal{C}) - F\left(1/2\right)\, .
     \end{equation}
The evaluation of the integral when $b=1$ and $b=2$ is straightforward. 
When $b=3$, the form of the solution depends on   the number of real roots of the polynomial in the denominator of Equation~\eqref{eq:sol_a=1}:
\begin{equation}
    \label{eq:p_b=3}
   p(\mathcal{C})\defeq\mathcal{C}(\mu-\mathcal{C})^3-(\mu-1)^3 \mathcal{C}^3\, .
\end{equation}
When $\mu\in(1,4)$, $p(\mathcal{C})$ has only two real roots ($\mathcal{C}=0,1$), so we decompose it as follows
\label{B<4}
\begin{equation}
 p(\mathcal{C})   = \mathcal{C}(1-\mathcal{C})((\mathcal{C}-q_1)^2+q_2^2)
\end{equation} 
where
\begin{equation}
\label{q_i}
    q_1\defeq\frac{\mu^2(3-\mu)}{2}, \quad \text{and}\quad q_2^2\defeq  \mu^3 - \frac{\mu^4(3-\mu)^2}{4}\equiv\frac{\mu^3(4-\mu)(\mu-1)^2}{4}\, .
\end{equation} 
There exists a critical value of $\mu$, ($\mu=4$), such that $p(\mathcal{C})$ has three real roots, one of which has a double multiplicity; in this case $p(\mathcal{C})$ can be expressed as 
\begin{equation}
\label{B=4}
    p(\mathcal{C})= \mathcal{C}(1-\mathcal{C})(\mathcal{C}+8)^2. 
\end{equation} 
When $\mu>4$, $p(\mathcal{C})$ has four distinct real roots ($\mathcal{C}=0,1$), so we decompose it as follows
\begin{subequations}   
\label{B>4}
\begin{equation}
p(\mathcal{C})= \mathcal{C}(1-\mathcal{C})(\mathcal{C}-n_+)(\mathcal{C}-n_-)\, ,
\end{equation} 
with
\begin{equation}
\label{n_i}
    n_\pm\defeq \frac{\mu^2(3-\mu)\pm\sqrt{\mu^4(3-\mu)^2-4\mu^3}}{2}. 
\end{equation} 
\end{subequations}
Table \ref{tab:F} summarises the expressions for $F(\mathcal{C})$ for the above combinations of $a$ and $b$.

\subsubsection{Numerical solutions for travelling wave}
\label{Num.sol.TW}
 Equations~\eqref{29b}, \eqref{eq:C1/2b}, and \eqref{11storder} define a system of three coupled first order ODEs subject to one constraint on $\mathcal{C}$ at $\eta =0$. To solve this numerically we must approximate the infinite domain with a large but finite domain; as such we take $\eta\in[-B,B]$, where $B$ is a large constant.  We choose $B$ to be sufficiently large so that  
 \begin{equation}
 \nd{\mathcal{C}}{\eta}\, ,\   \nd{\mathcal{C}_p}{\eta}\, ,\  \nd{\mathcal{M}_p}{\eta} \to 0 \qquad \text{as}\qquad  \eta\to\pm B,
 \end{equation} 
 \textit{ie.,} the solution profiles are clearly asymptoting to zero at $\eta=-B$ and one at $\eta=B$. 
 In particular, in Figures \ref{fig:numerics_vs_TW1} and \ref{fig:numerics_vs_TW2} we take $B=75$.
In addition to  Equations~(\ref{29b},\ref{eq:C1/2b},\ref{11storder}), we must also impose a condition at $\eta=B$, that is, we enforce $\mathcal{M}_p(B)=0$.   
To solve this numerically, we use a direct method based on Chebyshev spectral collocation;  we discretise the spatial domain into an odd number of collocation points (to allow for easier enforcement of the condition at $\eta=0$) and use a dense Chebyshev differentiation  matrix to discretise the derivatives and determine an objective function. We solve
this nonlinear system via Newton iteration updating the solution via the analytically determined  Jacobian (see  \ref{Cheb_Num}) and terminating once the objective function is sufficiently small.

\subsubsection{Breakthrough curves}
When $\Da$ and $\alpha$ are neglected, Equation~\eqref{eq:vel} reduces to $\dot{S} = 1$. Therefore, the solution for any $a$, $b$ combination is given implicitly for $C(\zeta,\tau)$ by 
\begin{subequations}
 \label{solns}
\begin{equation}
 \label{solC}
     -\beta\left[\zeta-l-\left(\tau-\tau_{1/2}\right)\right]  = F({C}) - F\left(\frac{1}{2}\right),
 \end{equation}
  for $m_p(Z,t)$ by 
 \begin{equation}
 \label{solmp}
     -\beta\left[\zeta-l-\left(\tau-\tau_{1/2}\right)\right]  = F({m_p}) - F\left(\frac{1}{2}\right)
 \end{equation}
 and implicitly for $C_p(\zeta,\tau)$ via substitution of $C(\zeta,\tau)$ in Equation~\eqref{Cp}. 
 \end{subequations}
It is straightforward to find the breakthrough curves by evaluating Equation~\eqref{solC} at $\zeta = l$. In particular, the breakthrough curves that correspond to the analytical solutions presented in Table \ref{tab:F}, read:
\begin{itemize}
    \item $a=1, b=1$:
    \begin{equation}
        \label{eq:BreCu11}
         -\beta(\tau-\tau_{1/2})  = (\beta+\mu)\ln(2\mathcal{C})+ (1-\beta-\mu)\ln\left(2\left(1-\mathcal{C}\right)\right)\, ,
    \end{equation}
    \item $a=1, b=2$:
    \begin{equation}
    \begin{split}
        \label{a1b2C}
         -\beta(\tau-\tau_{1/2})  =& \left(\displaystyle\frac{\beta}{\mu^2}+1\right)\ln(2\mathcal{C})- \displaystyle\frac{\beta+(\mu-1)^2}{\mu^2-1}\ln\left(2\left(1-\mathcal{C}\right)\right)\\
         &+\displaystyle\frac{\beta+\mu^2(\mu-1)^2}{\mu^2(\mu^2-1)}\ln\left(\frac{\mu^2-\mathcal{C}}{\mu^2-1/2} \right) 
    \end{split}
    \end{equation}
    \item $a=1, b=3$: 
    
    If $\mu\in(1,4):$
    \begin{equation}
        \label{eq:BreCu13A}
        \begin{split}
            -\beta(\tau-\tau_{1/2})  =& -\left(\displaystyle\frac{\beta}{\mu^3}+1\right)\ln(2\mathcal{C})-A_1\ln\left(2\left(1-\mathcal{C}\right)\right)+A_2\ln\left(\frac{(\mathcal{C}-q_1)^2+q_2^2}{(1/2-q_1)^2+q_2^2}\right)  \\
            & -  \displaystyle\frac{A_3}{q_2}\left(\arctan\left(\displaystyle\frac{\mathcal{C}-q_1}{q_2}\right)+ \arctan\left(\displaystyle\frac{1/2-q_1}{q_2}\right)\right)\, ,
        \end{split}
    \end{equation}
    where $A_1, A_2, A_3, q_1, q_2,$, as defined in Table \ref{tab:F}, are constants that depend on $\mu$ and $\beta$,

    If $\mu = 4$:
    \begin{equation}
        \label{eq:BreCu13B}
        \begin{split}
            -\beta(\tau-\tau_{1/2})  =&\left(\displaystyle\frac{\beta}{\mu^3}+1\right)\ln(2\mathcal{C}) - A_1\ln\left(2\left(1-\mathcal{C}\right)\right)\\
            &+\left(
   \displaystyle\frac{\beta +(\mu-n_+)^3}{n_+(1-n_+)(n_+-n_-)}
   \right)\ln\left(\frac{\mathcal{C}-n_+}{1/2-n_+}\right) ,
        \end{split}
    \end{equation}

    If $\mu>4$, 
\begin{equation}
        \label{eq:BreCu13C}
        \begin{split}
            -\beta(\tau-\tau_{1/2})  =&\left(\displaystyle\frac{\beta}{\mu^3}+1\right)\ln(2\mathcal{C}) - A_1\ln\left(2\left(1-\mathcal{C}\right)\right)\\
            &+\left(
   \displaystyle\frac{\beta +(\mu-n_+)^3}{n_+(1-n_+)(n_+-n_-)}
   \right)\ln\left(\frac{\mathcal{C}-n_+}{1/2-n_+}\right) ,
        \end{split}
    \end{equation}
   where $A_1$ is defined in Table \ref{tab:F} and $n_\pm$ are defined in Equation~\eqref{n_i}.
\end{itemize}

\subsection{Comparison with numerical simulations}
\label{sec:num_plots}

In Figures \ref{fig:numerics_vs_TW1} and~\ref{fig:numerics_vs_TW2} we compare the numerical solution of the full IBVP~(\ref{non-dim},\ref{non-dim-BC}), the numerical solutions of the travelling wave approximation~\eqref{eq:TW}, and the analytical solutions derived in \S\ref{sec:exp} after setting $\alpha,\Da,\Pe^{-1}=0$ in the system~\eqref{eq:TW}. The parameter values for Figure \ref{fig:numerics_vs_TW1} have been chosen to be representative of the experimental data in \S\ref{HgII}, while those for Figure \ref{fig:numerics_vs_TW2} are representative of the experimental data in \S\ref{CO2}.


The first row in both figures shows the evolution of the travelling wave along the adsorption column (blue to yellow as $\tau$ increases). We can see how the initial transient evolves until it matches the early-time solutions discussed in \S\ref{sec:early}.
For larger times the system follows the travelling wave form. In Figure \ref{fig:numerics_vs_TW1}, corresponding to Hg(II), the wave front becomes steeper with increasing $b$. In Figure \ref{fig:numerics_vs_TW2} the front is steepest for low $b$.

In the second row of both figures we show  curves corresponding to constant $C(\zeta^*(\tau),\tau)$. If a travelling wave occurs we should observe that
\begin{equation}
    \zeta^*(\tau) = \dot{S}\tau + \kappa\, .
\end{equation}
Taking logarithms of both sides,
\begin{equation}
    \log(\zeta^*)= \log(\dot{S}\tau+\kappa)=\log(\tau)+\log(\dot{S})+\log\left(1+\frac{\kappa}{\dot{S}\tau}\right)\, .
\end{equation}
For sufficiently large values of $\tau$ the final term is negligible and therefore, if there is a travelling wave, the three curves must become co-linear.
In the figures the curves correspond to $C(\zeta^*(\tau),\tau)=$0.1, 0.5, and 0.9.  In all cases  they coincide for $\tau=\ord{1}$, which is precisely the time regime when the travelling wave is formed after the initial transient. In each case the slope corresponds to $\dot S$. Note that although this method does not represent a rigorous proof for the existence of travelling waves, it is very useful in suggesting when they will  exist. The plots in Figures \ref{fig:numerics_vs_TW1} and~\ref{fig:numerics_vs_TW2} are consistent with the conjecture presented in \S\ref{TW} on the existence of travelling waves.

The third row shows the breakthrough curves obtained by numerically solving the system of partial differential equations~(\ref{non-dim},\ref{non-dim-BC}), the full travelling wave system~\eqref{eq:TW}, and the analytical solutions given in \S\ref{sec:exp}. The error with respect to the full numerical solution, which can be observed in the insets, is $\ll10^{-3}$, implying that there is a very good agreement between the three solutions for the two considered parameter sets. The analytical solutions can therefore be considered accurate and can be used to fit the experimental data in \S\ref{sec:results}. 

\begin{figure}[H]
    \centering
\includegraphics[width=.95\textwidth]{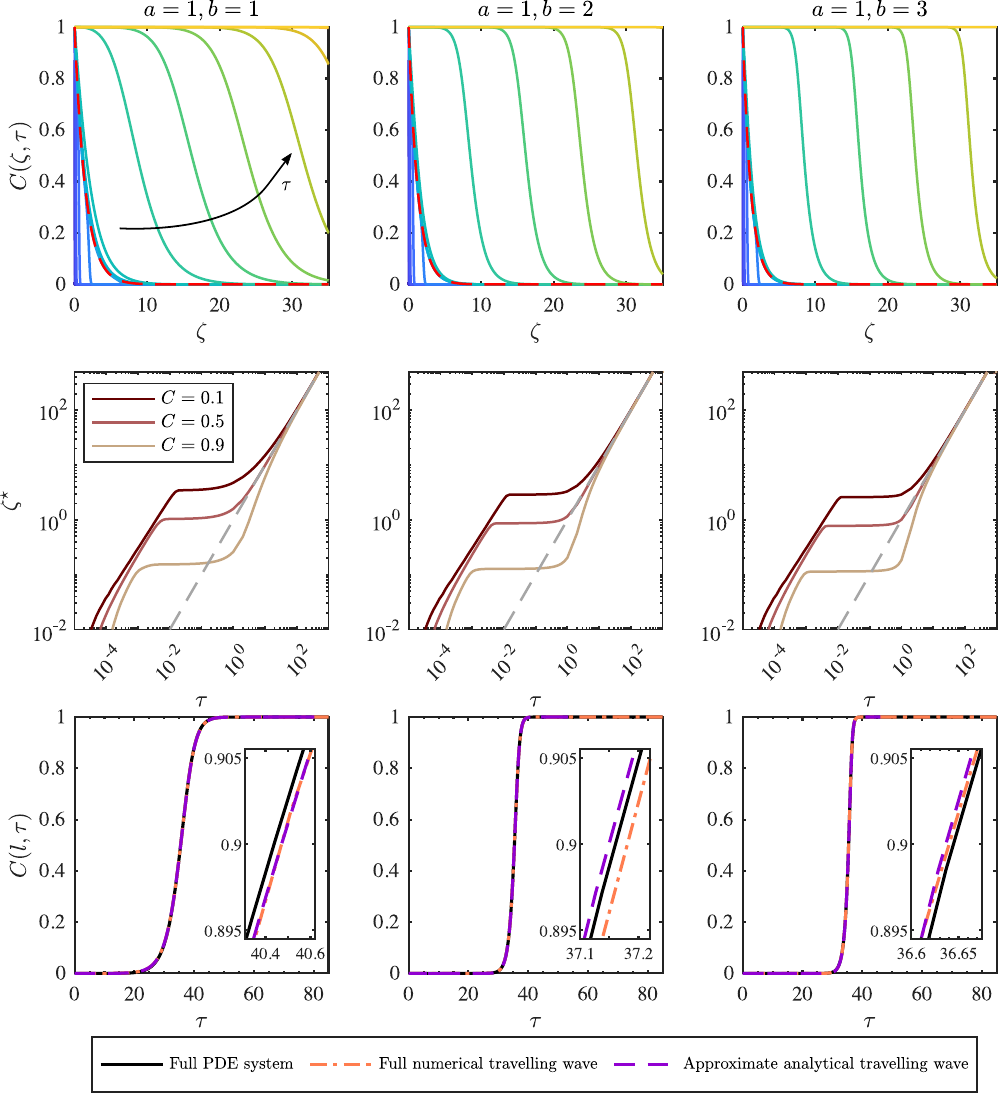} \quad
\caption{\textit{\textbf{First row}}: evolution of the travelling wave with respect to time (blue to yellow) for different combinations of $a,b$. We take logarithmically spaced times for $\Da\Pe^{-1}\leq\tau\leq1$ and linearly spaced values thereafter, $\tau\in\{2.0\times10^{-5}, 6.7 \times10^{-5}, 2.2\times10^{-4}, 7.4\times10^{-4}, 2.5\times10^{-3},8.2\times10^{-3}, 0.027, 0.09, 0.30, 1, 8.7, 16, 24, 32, 40, 47, 55\}$. The early time solution  of \S\ref{sec:early} (red dashed lines) shows the quasi-steady state of the solution profile for $\tau = O(\delta)$. \textit{\textbf{Second row}}: the three curves correspond to $C(\zeta^*(\tau),\tau)=0.1, 0.5$ and 0.9 in logarithmic scale. \textit{\textbf{Third row}}: breakthrough curves obtained by numerically solving the system of partial differential equations~(\ref{non-dim},\ref{non-dim-BC}), the full travelling wave system~\eqref{eq:TW} and the analytical solutions given in \S\ref{sec:exp}.  The parameter values are: $l=35$ (top and bottom rows) $l=500$ (middle row), $\mu=2$, $\Pe^{-1}=5\times 10^{-3}$, $\beta=1$, $\Da = 2\alpha = 4\times 10^{-3}$.}
    \label{fig:numerics_vs_TW1}
\end{figure}

\begin{figure}[H]
    \centering
\includegraphics[width=.95\textwidth]{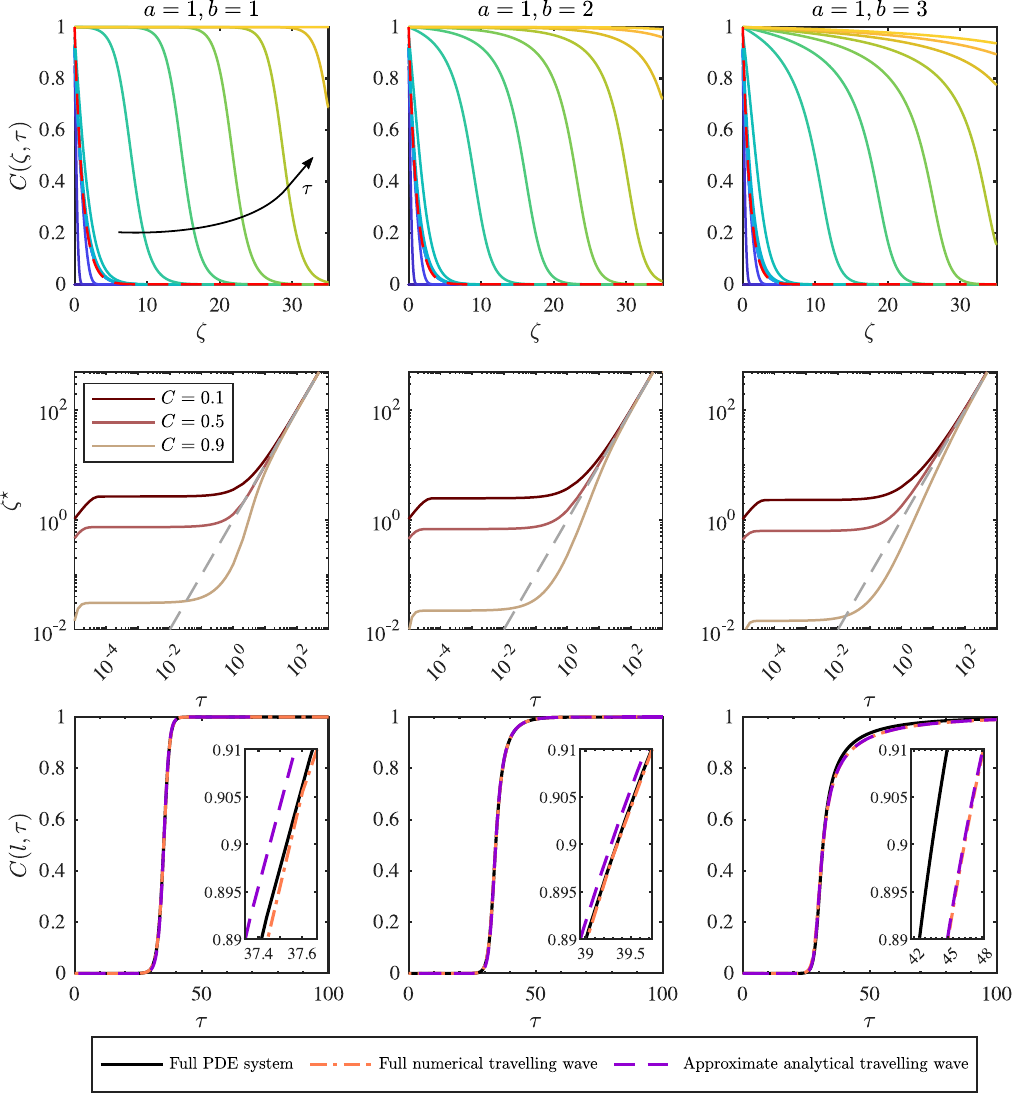} \quad
    \caption{Same as in Figure \ref{fig:numerics_vs_TW1} but with the following changes to parameter values: $\mu=2$, $\Pe^{-1}=5\times 10^{-3}$, $\beta=1$, $\Da = 2\alpha = 4\times 10^{-3}$, and, in the first row $\tau\in\{10^{-6}, 4.6\times10^{-6}, 2.2\times10^{-5}, 10^{-4}, 4.6\times10^{-4}, 2.2\times10^{-3},0.01, 0.046,0.22,1.0, 8.0, 15, 22,29,36,43,50\}$. }
    \label{fig:numerics_vs_TW2}
\end{figure}

\section{Discussion and Results} \label{sec:results}
In this section, we examine the  behaviour of the analytical models developed in \S\ref{TW} and assess their accuracy in capturing the behaviour of two different data sets. In particular, we consider data from \citet{Sulaymon2014} on the removal of the Mercury(II) ion, denoted Hg(II) (or, equivalently,  Hg$^{2+}$) by modified activated carbon, and data from  \citet{Goeppert2014} on carbon capture 
by fumed silica impregnated with polyethylenimine (PEI).  In the former case,  \citet{Sulaymon2014} state that the activated carbon adsorbent grains are approximately spherical. In the latter case, the fumed silica is essentially a powder, so the geometry of the particles is assumed to be spherical. Therefore, throughout this section we assume that the column filters have a circular cross-section and further we assume that the size and shape of each porous grain of which the column filter is comprised is spherical; this assumption means that the specific surface  $|\partial {\omega}|/|{\omega}| = 3/R$, where $R$ is now, simply, the radius of any arbitrary porous grain.

\subsection{Removal of Hg(II) on activated carbon}
\label{HgII}
\citet{Sulaymon2014} investigate the removal by adsorption of  various metallic ions from a water based solution by activated carbon.  Here, we focus on one ion in particular, Hg(II). The experimental parameters as determined in  \citet{Sulaymon2014} are summarised in Table \ref{tab:Sulaymon2014}.

\begin{table}[H]
    \centering
    \caption{Experimental parameters for the adsorption of Hg(II) by activated carbon as extracted from \cite{Sulaymon2014}.}
    \label{tab:Sulaymon2014}
    \begin{tabular}{|c|c|c|c|c|c|c|}
    \hline
        \textbf{Parameter} & \textbf{Symbol} & \textbf{SI Units} & \multicolumn{4}{c|}{\textbf{Value}} \\ \hline
        Inlet concentration & $c_{in}$ & mol/m$^3$ & \multicolumn{4}{c|}{0.2493} \\ \hline
        Diffusivity & $D$ & m/s$^2$ & \multicolumn{4}{c|}{3.055$\times10^{-9}$} \\ \hline
        Interstitial velocity & ${v}$ & m/s & \multicolumn{4}{c|}{1.186$\times10^{-3}$} \\ \hline
        Column length & ${L}$ & m & \multicolumn{4}{c|}{0.05} \\ \hline
        Cross section area & $|W|$ & m$^2$ & \multicolumn{4}{c|}{1.963$\times10^{-3}$}\\ \hline
        Bulk density & $\rho_b$ & kg/m$^3$ & \multicolumn{4}{c|}{784.34} \\ \hline
        Column porosity & $\phi$ & - & \multicolumn{4}{c|}{0.601} \\ \hline
        Particle porosity & $\phi_p$ & - & \multicolumn{4}{c|}{0.76}\\ \hline
        Particle radius ($\times10^{-4}$) & $R$ & m & 1.03 & 1.9 & 3.075 & 3.89\\ \hline
        Half time & $t_{1/2}$ & h & 1.567 & 1.230 & 0.947 & 0.680\\ \hline
        Maximum adsorption possible & 
        $\bar{m}_{max}$ & mol/kg  & 0.0482 & 0.0378 & 0.0291 & 0.0208 \\ \hline
        Adsorbed amount at equilibrium & $\bar{m}_{e}$ & mol/kg & 0.0252 & 0.0198 & 0.0152 & 0.0108\\ \hline
    \end{tabular}
\end{table}

The amount of Hg(II) adsorbed at equilibrium, $\bar{m}_{e}$, has been calculated by integrating the area over the breakthrough curve via  $$\bar{m}_{e}\equiv({v}\phi/(\rho_bL))\int^\infty_0 (c_{in}-\bar{c})\,\text{d}t.$$ Thus, the maximum adsorption is $\bar{m}_{max}\equiv\mu\bar{m}_{e}$. Using the data from the batch experiments in \citet{Yousif2013}, we take $\mu=1.9125$. Note that the authors of \citet{Yousif2013} are the same as in \citet{Sulaymon2014} and consider the same experimental conditions but for batch experiments.

Figure \ref{fig:Hgiso} shows the fitting of the Sips isotherm~\eqref{Sipsiso} for various values of $a/b$, to the experimental data extracted from \citet{Yousif2013}, obtained by using the MATLAB Curve Fitting Toolbox. Note that, in Equation~\eqref{Sipsiso} $c_e$ depends on $a$ and $b$ only through a term of the form $a/b$. Table \ref{tab:Hgisogoodfit} shows the calculated fitting parameters $\bar{m}_{max}$ and $\mathcal{K}$ and also the parameters measuring the goodness of fit. 
Note that, as expected, the predicted $\bar{m}_{max}$ values are higher for the batch experiments than they are for the column experiments.

\begin{figure}[H]
    \centering
\includegraphics[width=1\textwidth]{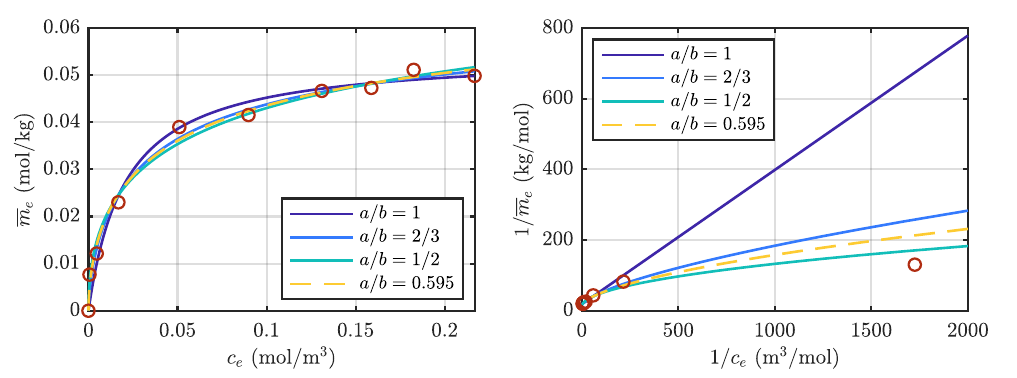} \quad
    \caption{Fitting of Sips isotherm~\eqref{Sipsiso} to the  batch data of \cite{Yousif2013} for Hg(II)  with average particle radius $R=1.875\times 10^{-4}$ m. We denote the concentration of Hg(II) at equilibrium as $c_e$. 
    }
    \label{fig:Hgiso}
\end{figure}

\begin{table}[H]
    \centering
    \caption{Sum of Squared Errors (SSE) and R-squared parameters obtained from  fitting  the Sips isotherm with various $a$ and $b$ to the equilibrium  data of  \citet{Yousif2013}.}
    \label{tab:Hgisogoodfit}
    \begin{tabular}{|c|c|c|c|c|c|}
    \hline
        \textbf{Parameter} & \textbf{Units} & \textbf{$a/b=1$} & \textbf{$a/b=1/2$} & \textbf{$a/b=2/3$} & \textbf{$a/b=0.595$} \\ \hline
        $\bar{m}_{max}$ & mol/kg & 0.0546 & 0.0902	& 0.0667 & 0.0733 \\ \hline
        $\mathcal{K}$ & m$^{3}$/mol & 48.181 & 2.883 & 8.8725 & 5.7631 \\ \hline
        SSE & (mol/kg)$^2$ & 0.580$\times10^{-4}$ & 0.309$\times10^{-4}$ & 0.294$\times10^{-4}$ & 0.281$\times10^{-4}$ \\ \hline
        R-squared & - & 0.983 & 0.991 & 0.991 & 0.992 \\ \hline
    \end{tabular}
\end{table}
By considering $a/b$ as another fitting parameter, $a/b=0.595$ yields the best fit (lowest SSE and R-squared closer to one). However, and according to the reaction~\eqref{chemistry}, this has no clear physical significance. The most physically reasonable combinations of $a$ and $b$ are $a/b=1, 1/2,$ or $2/3$; out of these options $a/b=1/2$ and $2/3$ yield an R-squared value which is closest to unity. Despite the fact that $a/b=2/3$ yields a slightly lower SSE than $a/b=1/2$, the difference between these two options is sufficiently marginal that we cannot draw a conclusion on which is the correct choice. To better distinguish between these two cases, we consider small values of $c_e$ (Figure \ref{fig:Hgiso}, right); from this we conclude that $a/b = 1/2$ offers the best fit and thus best describes the kinetics of the Hg(II) adsorption. Using $a/b=1/2$, we find that $\bar{m}_{max}=0.0902$ mol/kg. To determine 
$\mu\equiv \bar{m}_{max}/\bar{m}_{e}$ we take the value $\bar{m}_{e}=0.0472$ mol/kg as reported in \citet{Yousif2013} so that $\mu=1.9125$. Further, we take $\mu=1.9125$ to hold in the column studies. 

The reaction orders $a=1, b=2$, which correspond to the ratio $a/b=1/2$, could be related to the oxidation state of Hg(II), which according to Fourier-transform infrared spectroscopy studies carried out by  \citet{Sulaymon2014}, is complexed by H and O atoms of hydroxyl bonds in a cation exchange reaction. 

\begin{figure}[H]
    \centering
 \includegraphics[width=.95\textwidth]{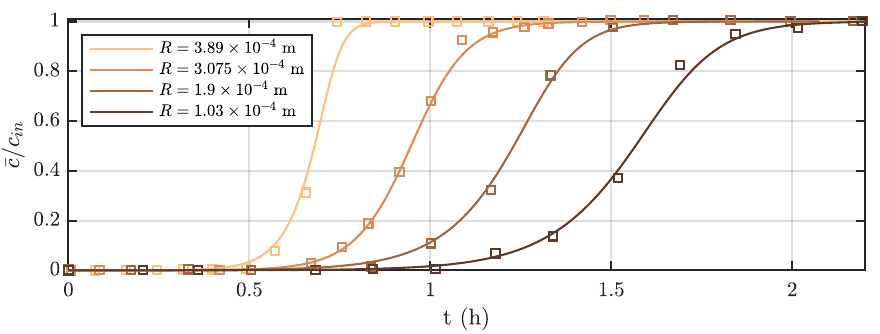}
    \caption{Fitting of the approximate analytical breakthrough model~\eqref{a1b2C} (corresponding to  $a=1$ and $b=2$) to the experimental breakthrough data for Hg(II) adsorption onto activated carbon for four distinct particle sizes \cite{Sulaymon2014}. As $R$ increases the colour transitions from dark brown to light orange.} 
    \label{fig:HgBC}
\end{figure}

On determining the kinetic orders $a=1$, $b=2$, we fit the analytical solution to the experimental breakthrough data \cite{Sulaymon2014}. The results are shown in Figure \ref{fig:HgBC}.
Each curve requires only two additional  fitting parameters 
$\mathcal{T}=  \bar{m}_{e}^{1-b}/(k_+ c_{in}^a) $ and $\beta={k}_p |\partial{\omega}|   \mathcal{T}\mathrm{Da}/( |{\omega}| \varphi)$.
The results of the fitting are shown in Table \ref{tab:HgBC}.

\begin{table}[H]
    \centering
    \caption{Parameters obtained by fitting the approximate analytical breakthrough model~\eqref{a1b2C} (corresponding to  $a=1$ and $b=2$) to the experimental breakthrough data for Hg(II) adsorption onto activated carbon for four distinct particle sizes \cite{Sulaymon2014}. }
    \label{tab:HgBC}
    \begin{tabular}{|c|c|c|c|c|c|}
    \hline
        \multirow{2}{5em}{\centering \textbf{Parameter}} & \multirow{2}{5em}{\centering \textbf{Units}} & \multicolumn{4}{c|}{\textbf{Particle Radius (m)}} \\ \cline{3-6} 
        & & \textbf{$\boldsymbol{1.03\times10^{-4}}$} & \textbf{$\boldsymbol{1.9\times10^{-4}}$} & \textbf{$\boldsymbol{3.075\times10^{-4}}$} & \textbf{$\boldsymbol{3.89\times10^{-4}}$} \\ \hline
        $\mathcal{T}$ & s & 617.37 & 364.67 & 612.38 & 83.40 \\ \hline
        $\beta$ & - & 1.6658 & 0.9834 & 4.6227 & 0.4168 \\ \hline
        $k_{+}$ & m$^3$kg mol$^{-2}$s$^{-1}$ & 0.138 & 0.233 & 0.139 & 1.020 \\ \hline
        $k_{-}$ & kg mol$^{-1}$s$^{-1}$ & 0.0286 & 0.0484 & 0.0288 & 0.2117 \\ \hline
        ${k}_p$ & m/s & 3.445$\times10^{-5}$ & 6.352$\times10^{-5}$ & 2.878$\times10^{-4}$ & 2.410$\times10^{-4}$ \\ \hline
        $\mathcal{L}$ & m & 2.97$\times10^{-3}$ & 1.75$\times10^{-3}$ & 2.94$\times10^{-3}$ & 4.01$\times10^{-4}$ \\ \hline
        Pe$^{-1}$ & - & 8.68$\times10^{-4}$ & 1.5$\times10^{-3}$ & 8.75$\times10^{-4}$ & 6.4$\times10^{-3}$ \\ \hline
        Da & - & \multicolumn{4}{c|}{4.05$\times10^{-3}$} \\ \hline
        $\alpha$ & - & \multicolumn{4}{c|}{2.04$\times10^{-3}$} \\ \hline
        SSE & - & 0.0077 & 0.0020 & 0.0043 & 0.0259 \\ \hline
        R-squared & - & 0.9966 & 0.9993 & 0.9987 & 0.9934 \\ \hline
    \end{tabular}
\end{table}

The results in Figure \ref{fig:HgBC} and 
Table \ref{tab:HgBC} show an excellent agreement between the model and the  data with $a=1$, $b=2$.
 The values of $k_{+}$ and $k_{-}$ generally increase  with increasing particle size, while ${k}_p$ decreases with increasing particle size.  The equilibrium constant $\mathcal{K}\equiv k_{+}/k_{-}$ is constant at constant temperature, so the change in $k_{-}$ is driven by that in $k_{+}$. 
Using these fitting parameters, we calculate the dimensionless values Pe$^{-1}$, Da and $\alpha$. Note that, they are all $\ord{10^{-3}} $,  and consequently  consistent with the assumptions used to derive the travelling wave equations in \S~\ref{TW}.

The value of $\beta$ is  $\ord{1}$ for all the particle sizes; this indicates that the effect of the particle size in the breakthrough curve is significant, as discussed in \S\ref{sec:param}. The value of $\beta$ also generally decreases as the particle radius ($R$) increases (see Table \ref{tab:HgBC}). The translation  of the curves  to the right as the particle size decreases shown in Figure \ref{fig:HgBC} indicates that the intra-particle diffusion is not the cause of the change in $\beta$ rather it is a result of the increase of the external specific surface when the column is filled with a greater number of small  particles. When the effect of intra-particle diffusion is small, smaller particles lead to higher adsorbed fractions. Thus, the decrease of $\beta$ in this model captures the size effect rather than the effect of intra-particle diffusion \cite{Valverde2024}.

\subsection{Removal of $\text{CO}_\text{2}$ from air}
\label{CO2}

Goeppert et al. \cite{Goeppert2014} study the adsorption of carbon dioxide ($\text{CO}_\text{2}$)  onto fumed  silica impregnated with PEI. The addition of PEI to the  silica improves the adsorption since $\text{CO}_\text{2}$ reacts with the primary and secondary amino groups in PEI. If the reaction occurs in a dry environment, two molecules of $\text{CO}_\text{2}$ react with the amine radicals to produce carbamate molecules on the surface of the adsorbent. However in the presence of water molecules, only one amine is needed to react with one molecule of $\text{CO}_\text{2}$  \cite{Goeppert2014}. The experimental parameters as determined in  \cite{Goeppert2014} are summarised in Table \ref{tab:Goeppert2014}.

\begin{table}[t]
    \centering
    \caption{Experimental data extracted from Goeppert et al. \cite{Goeppert2014}.}
    \label{tab:Goeppert2014}
    \begin{threeparttable}
    \begin{tabular}{|c|c|c|c|c|c|c|}
    \hline
        \textbf{Parameter} & \textbf{Symbol} & \textbf{SI Units} & \multicolumn{4}{c|}{\textbf{Value}} \\ \hline
        Inlet concentration & $c_{in}$ & mol/m$^3$ & \multicolumn{4}{c|}{0.0166} \\ \hline
        Interstitial velocity & ${v}$ & m/s & \multicolumn{4}{c|}{0.2314} \\ \hline
        Column length & ${L}$ & m & \multicolumn{4}{c|}{0.1} \\ \hline
        Cross section area & $|W|$ & m$^2$ & \multicolumn{4}{c|}{5.027$\times10^{-5}$}\\ \hline
        Bulk density & $\rho_b$ & kg/m$^3$ & \multicolumn{4}{c|}{541.127} \\ \hline
        Column porosity$^*$ & $\phi$ & - & \multicolumn{4}{c|}{0.48} \\ \hline
        Particle porosity & $\phi_p$ & - & \multicolumn{4}{c|}{0.416} \\ \hline
        Diffusivity$^{**}$ ($\times10^{-4}$) & $D$ & m/s$^2$ & 0.6 & 0.9 & 2.6 & 4.1 \\ \hline   
        Particle radius ($\times10^{-4}$) & $R$ & m & 1.25 & 1.875 & 5.5 & 8.5\\ \hline
        Half time & $t_{1/2}$ & h & 10.730 & 12.188 & 12.080 & 14.028\\ \hline
        Maximum adsorption possible  &  $\bar{m}_{max}$ & mol/kg  & 1.662 & 1.725 & 1.751 & 1.915 \\ \hline
        Adsorbed amount at equilibrium & $\bar{m}_{e}$ & mol/kg & 1.614 & 1.675 & 1.7 & 1.859\\ \hline
    \end{tabular}
    \begin{tablenotes}
   \item[*] Porosity $\phi$ estimated with a column-to-particle diameter ratio $\in(4.7, 32)$ \cite{Chan2011,Mughal2012}. \item[**] Diffusivity coefficients approximated from experimental chart in \citet{Levenspiel1999}.
  \end{tablenotes}
    \end{threeparttable}
\end{table}

 \citet{Goeppert2014} provide a range of particle diameters: <0.25mm; 0.25--0.5mm; 0.5--1.7mm; and >1.7mm. To calculate the radius in Table \ref{tab:Goeppert2014}, we take the average diameters for the intermediate values and 0.25, 1.7mm for the extreme values. Although in cite{Goeppert2014}  the adsorbed amount at equilibrium is reported, experimental data accounting for the isotherm of the different particle sizes at a single temperature are not provided. Nonetheless, for particles of diameter ranging 0.25-0.50 mm,   the adsorption of $\text{CO}_\text{2}$ at equilibrium as a function of temperature is reported. Taking into account that all these measurements were made with the same inlet concentration, the maximum adsorbed fraction, as well as other thermodynamic magnitudes (such as the enthalpy of adsorption) can be obtained using the Van't Hoff equation; this reads,
\begin{align}
 \label{VantHoff1}
     \mathcal{K}=A\exp{\left(-\frac{\Delta H}{R_gT}\right)} \,,
 \end{align}
where $\mathcal{K}=k_{+}/k_{-}$ is the equilibrium constant of the Sips isotherm (m$^{3a}$mol$^{-a}$), $\Delta H$ is the enthalpy of adsorption (J mol$^{-1}$), $R_g$ is the ideal gas constant (8.314 J K$^{-1}$mol$^{-1}$), and $T$ is the temperature (K). Thus, we have
\begin{subequations}
    \begin{equation}
 \label{VantHoff2}
    \exp{\left(\frac{\Delta H}{RT}\right)}=Ac_{in}^{a}\bar{m}_{max}^{b}\left(\frac{1}{\bar{m}_{e}}-\frac{1}{\bar{m}_{max}}\right)^{b},
 \end{equation}
 which can be arranged into
     \begin{equation}
 \label{VantHoff3}
    \frac{1}{T}=\frac{R}{\Delta H}\ln{\left(Ac_{in}^{a}\bar{m}_{max}^{b}\right)}+\frac{bR}{\Delta H}\ln{\left(\frac{1}{\bar{m}_{e}}-\frac{1}{\bar{m}_{max}}\right)},
 \end{equation}
\end{subequations}
The fitting of Equation~\eqref{VantHoff3} to the experimental data provided in \cite{Goeppert2014} is shown in Figure \ref{fig:VantHoff}. This has been carried out using the MATLAB Curve Fitting Toolbox.

\begin{figure}[H]
    \centering
    \includegraphics[width=.95\textwidth]{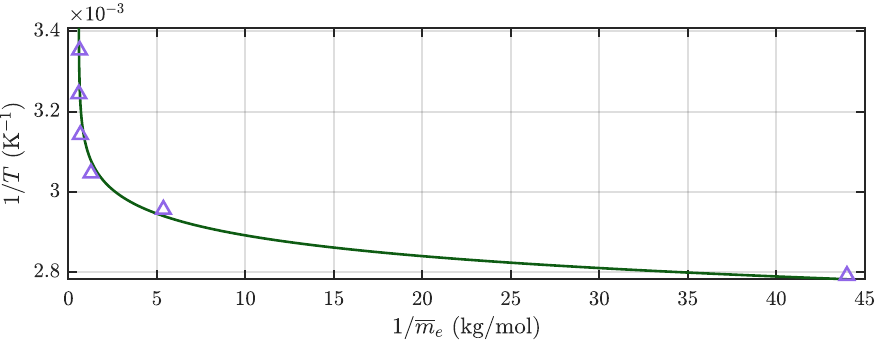}
    \caption{Temperature versus adsorbed amount  of CO$_2$ at equilibrium fitted with Equation~\eqref{VantHoff3} for particles in the range 0.25- 0.5mm.
    }
    \label{fig:VantHoff}
\end{figure}

\begin{table}[H]
    \centering
    \caption{Optimal parameters obtained via fitting of Equation~\eqref{VantHoff3} to the experimental data of \cite{Goeppert2014}.}
    \label{tab:Tvsqeparam}
    \begin{tabular}{|c|c|c|}
    \hline
        \textbf{Parameter} & \textbf{Units} & \textbf{Value} \\ \hline
        $(R/\Delta H)\ln{\left(Ac_{in}^{a}\bar{m}_{max}^{b}\right)}$ & K$^{-1}$ & 3.052$\times10^{-3}$ \\ \hline
        $bR/\Delta H$ & K$^{-1}$ & -7.15$\times10^{-5}$ \\ \hline
        $\bar{m}_{max}$ & mol/kg & 1.725 \\ \hline
    \end{tabular}
\end{table}

 Table \ref{tab:Tvsqeparam} shows the best fit parameters.
The value of $\bar{m}_{max}$ only applies to the particle size range 0.25 -- 0.5 mm; for this particle size \citet{Goeppert2014} determine $\bar{m}_{max}=1.725$ mol/kg, hence $\mu=\bar{m}_{max}/\bar{m}_{e}=1.03$. Since the maximum and the equilibrium adsorbed amount are considered to increase proportionally with increasing particle size we assume this value holds for all the particle sizes.

Note that the enthalpy of adsorption can only be obtained once the correct  $b$ is determined. In order to assess the form of the reaction, in Figure \ref{fig:abcompare} the data with $R=1.25\times10^{-4}$~m and $R=8.5\times10^{-4}$~m has been fitted by the breakthrough models~(\ref{eq:BreCu11} --\ref{eq:BreCu13C}). 

\begin{figure}[H]
    \centering
\includegraphics[width=1\textwidth]{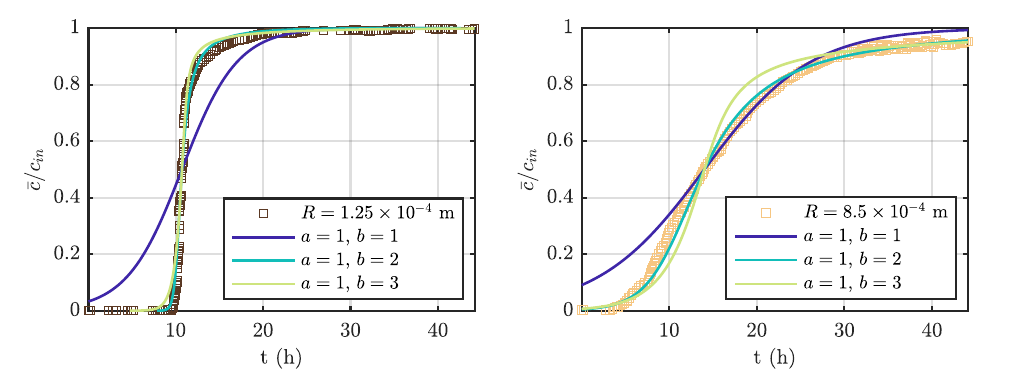} \quad
    \caption{Comparison of the breakthrough models (\ref{eq:BreCu11} --\ref{eq:BreCu13C}) to the experimental data  of  \cite{Goeppert2014}. Left: $R=1.25\times10^{-4}$ m. Right: $R=8.5\times10^{-4}$ m. Fitting parameters are presented in Table \ref{tab:abgoodfit}}
   \label{fig:abcompare}
\end{figure}

\begin{table}[H]
    \centering
    \caption{Sum of Squared Errors (SSE) and R$^2$ parameters obtained from  fitting the breakthrough models~(\ref{eq:BreCu11} --\ref{eq:BreCu13C})  to the experimental data of \cite{Goeppert2014} with $R=1.25\times10^{-4}$ m and $R=8.5\times10^{-4}$ m.}
    \label{tab:abgoodfit}
    \begin{tabular}{|c|c|c|c|c|c|}
    \hline
        \textbf{Radius (m)} & \textbf{Parameter} & \textbf{Units} & \textbf{$a=b=1$} & \textbf{$a=1, b=2$} & \textbf{$a=1, b=3$} \\ \hline
        \multirow{2}{5em}{\centering $1.25\times10^{-4}$} &
        SSE & - & 5.4559 & 0.2942 & 0.4556 \\ \cline{2-6}
        & R-squared & - & 0.7355 & 0.9857 & 0.9779 \\ \hline
        \multirow{2}{5em}{\centering $8.5\times10^{-4}$} &
        SSE & - & 0.8303 & 0.0960 & 0.6066 \\ \cline{2-6}
        & R-squared & - & 0.9617 & 0.9956 & 0.9720 \\ \hline
    \end{tabular}
\end{table}

In Table \ref{tab:abgoodfit} we present the SSE and R$^2$ values for three $a,b$ combinations and the two radii shown in Figure \ref{fig:abcompare}. 
It is clear that the best combination is $a=1,  b=2$. This is consistent with the results reported by  \citet{Aguareles2023} for small particles, which established a correlation between the reaction of $\text{CO}_\text{2}$ with the amine groups in PEI. 

\begin{figure}[H]
    \centering
\includegraphics[width=1\textwidth]{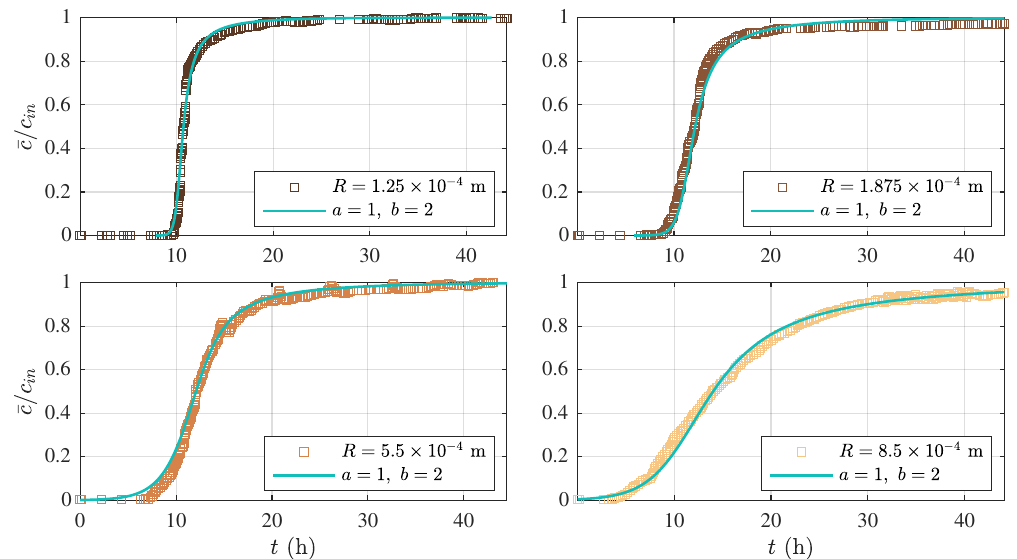} \quad
    \caption{Fitting of the breakthrough model \eqref{a1b2C} (corresponding to  $a=1$ and $b=2$) to the experimental breakthrough data of \cite{Goeppert2014} for a range of particle sizes, (a) $R=1.25\times10^{-4}$, (b) $R=1.875\times10^{-4}$, (c) $R=5.5\times10^{-4}$, (d) $R=8.5\times10^{-4}$.
    }
        \label{fig:CO2BC}
\end{figure}

The fitting of the experimental breakthrough data with the model~\eqref{a1b2C} is shown in Figure \ref{fig:CO2BC}. As before  the only fitting parameters are $\mathcal{T}$ and $\beta$; these are presented in Table \ref{tab:CO2BC}. The results shown in Figure \ref{fig:CO2BC} and  Table \ref{tab:CO2BC} demonstrate excellent agreement between the model and experimental data. The values of $k_{+}$, $k_{-}$ and ${k}_p$ clearly show an inverse relation with increasing particles size, with the exception of  $k_p$ for the largest particle. We hypothesise that this apparent  inconsistency might be related to experimental noise.

\begin{table}[H]
    \centering
    \caption{Parameters obtained from  fitting the breakthrough model~\eqref{a1b2C} to the  experimental breakthrough data of \cite{Goeppert2014}, for four particle sizes. }
    \label{tab:CO2BC}
    \begin{tabular}{|c|c|c|c|c|c|}
    \hline
        \multirow{2}{5em}{\centering \textbf{Parameter}} & \multirow{2}{5em}{\centering \textbf{Units}} & \multicolumn{4}{c|}{\textbf{Particle Radius (m)}} \\ \cline{3-6} & & \textbf{$\boldsymbol{1.25\times10^{-4}}$} & \textbf{$\boldsymbol{1.88\times10^{-4}}$} & \textbf{$\boldsymbol{5.50\times10^{-4}}$} & \textbf{$\boldsymbol{8.50\times10^{-4}}$} \\ \hline
        $\mathcal{T}$ & s & 1.07$\times10^{3}$ & 2.18$\times10^{3}$ & 2.45$\times10^{3}$ & 7.38$\times10^{3}$ \\ \hline
        $\beta$ & - & 5.93$\times10^{5}$ & 9.72 & 0.565 & 5.29 \\ \hline
        $k_{+}$ & m$^3$kg mol$^{-2}$s$^{-1}$ & 0.0348 & 0.0166 & 0.0145 & 0.00440 \\ \hline
        $k_{-}$ & kg mol$^{-1}$s$^{-1}$ & 5.13$\times10^{-7}$ & 2.45$\times10^{-7}$ & 2.14$\times10^{-7}$ & 6.49$\times10^{-8}$ \\ \hline
        ${k}_p$ & m/s & 2.33$\times10^{3}$ & 0.0294 & 0.0045 & 0.0237 \\ \hline
        $\mathcal{L}$ & m & 2.268$\times10^{-3}$ & 4.413$\times10^{-3}$ & 4.896$\times10^{-3}$ & 1.350$\times10^{-2}$ \\ \hline
        Pe$^{-1}$ & - & 0.114 & 0.0881 & 0.230 & 0.131 \\ \hline
        Da & - & 9.10$\times10^{-6}$ & 8.77$\times10^{-6}$ & 8.64$\times10^{-6}$ & 7.90$\times10^{-6}$ \\ \hline
        $\alpha$ & - & 4.10$\times10^{-6}$ & 3.95$\times10^{-6}$ & 3.90$\times10^{-6}$ & 3.56$\times10^{-6}$ \\ \hline
        SSE & - & 0.294 & 0.134 & 0.231 & 0.0960 \\ \hline
        R$^2$ & - & 0.986 & 0.994 & 0.992 & 0.996 \\ \hline
    \end{tabular}
\end{table}

In contrast to the Hg(II)  breakthrough curves in the present case  the  half-time shows little variation but the gradient decreases with increasing  particle size.  The decreasing gradient  indicates that the increase in particle size is related to a more significant intra-particle diffusion. For the smallest radius we see $\beta = \ord{10^5}$, which indicates intra-particle diffusion effects are negligible (and the model reduces to that of  \cite{Aguareles2023}). However, as the particle size increases  so that $\beta = \ord{1}$, the intra-particle model applies. The excellent data fit verifies the suitability of the model for large particles.  Finally, we note that all dimensionless values Pe$^{-1}$, Da and $\alpha$ are sufficiently small, confirming the approximations made in \S\ref{TW}.

\section{Conclusions/Future Work}\label{sec:conclusions}

This work provides a mathematical model that accounts for the effect of the micro-structure of the adsorbent in adsorption column processes combined with a Sips equation for the adsorption rate. The model has been carefully derived and  analysed and subsequently tested against experimental data. The main contributions of this work can be summarized in the following points:
 
\begin{enumerate}
    \item A rigorous derivation of the model has been presented, indicating the assumptions made at every step.  The model correctly captures the multiscale nature of the processes accounting for the advection-driven transport in the inter-particle region, the diffusion-driven transport in the intra-particle region, and the adsorption phenomenon at the adsorbent surface. 
    
    \item Adsorption kinetics have been modeled with a Sips sink term. This accounts for different possible combinations of partial orders of reaction and is able to describe both physisorption and chemisorption. The relation between the partial orders $a$ and $b$ is also linked to the adsorbent performance through the isotherm profile. 
    
    \item The existence of a solution using a travelling wave approximation has been discussed in terms of the relation between partial orders and the value of $\mu$ (the ratio of maximum to final adsorbed mass).  Analytical solutions using this approach have been provided for the most common combinations of partial-orders with "favourable adsorption" ($a\leq b$), namely $a=b=1$, $a=1$ $b=2$ and $a=1$ $b=3$. 

    \item The travelling wave solutions were compared with the numerical solution of the full mathematical model using different parameter values. The agreement between the two solution forms was excellent for all of the $a,b$ combinations studied. The numerical scheme developed in this work has been made available online \url{https://github.com/aguareles/Intra-particle-diffusion.git}. 
    
    \item Finally, the analytical solutions  have been tested against experimental data for column breakthrough curves for two different applications: Hg(II) adsorption on activated carbon from wastewater, and direct air CO$_2$ capture on PEI modified fumed silica. 
    The agreement between the breakthrough data and the analytical solution was excellent in all cases using the model with $a=1, b=2$ and only two fitting parameters. The values $a=1, b=2$ can be directly related to the reaction mechanism, namely with the ion-exchange reaction of Hg(II) that depends on the valence of the metal, and the reaction between CO$_2$ and the amine groups in the PEI structure. 
    
    The pattern observed in the fitted values with changing particle radius provides physical insights, and proves the suitability of the model to capture the effect of the decreasing specific surface and/or the increasing influence of intra-particle diffusion with increasing particle size.
\end{enumerate}

For all the reasons listed above, we can conclude that this work presents a significant contribution to the understanding of column adsorption processes and the underlying physical and chemical mechanisms. 

This work  represents a starting point for future studies where the solution under different $a,b$ configurations is assessed. This includes the study of adsorbate-adsorbent systems with unfavourable adsorption ($a>b$). In certain situations these cases may be of practical interest, for example in certain geographic locations where,  due to expense or availability, only  low quality filter materials are accessible \cite{Bishayee2022, Auton2024}. Another possible future line of research related to this work is its extension to non-integer orders, since many processes actually consist of complex mechanisms or multiple reactions which may be approximated by  global kinetics with fractional partial orders \cite{Yakabe2021}.

\section*{Declaration of Competing Interest}
The authors report no competing interests.

\section*{Acknowledgements}

This publication is part of the research projects PID2020-115023RB-I00 (funding M. Aguareles, T.G. Myers, M. Calvo-Schwarzwalder) financed by MCIN/AEI/ 10.13039/501100011033/. L. C. Auton and T.G. Myers acknowledge the CERCA Programme of the Generalitat de Catalunya for institutional support, their work was also supported by the Spanish State Research Agency, through the Severo Ochoa and Maria de Maeztu Program for Centres and Units of Excellence in R\&D (CEX2020-001084-M).
A. Valverde acknowledges support from the Margarita Salas UPC postdoctoral grants funded by the Spanish Ministry of Universities with European Union funds - NextGenerationEU. 


\appendix
\section{Numerical scheme} 
\label{Cheb_Num}
Here, we discuss the numerical schemes in more detail.  
\subsection{Full PDE system} 
Spectral collocation methods involve discretising the solution domain into a set of $N$ points (collocation points), defining a global function that interpolates the solution at these collocation points (the interpolant), and then approximating the derivatives of the solution as the derivatives of the interpolant. In a Chebyshev spectral collocation method, the collocation points are the $N$ Chebyshev points $x_k\in[1,-1]$, which can be defined as~\cite{piche2007solving}
\begin{equation}\label{eq:Chebpoints}
    x_k = \cos\left(\frac{(k-1)\pi}{N-1}\right), \quad k = 1,\ldots{},N.
\end{equation}
The basis functions from which the interpolant is composed are then a set of $N$ polynomials of degree $N-1$ satisfying the criterion that each is nonzero at exactly one distinct collocation point. Note that other definitions of the Chebyshev points are also commonly used (\textit{e.g.}, \citep{trefethen2000spectral}). For the definition~(\ref{eq:Chebpoints}), \citet{weideman2000matlab} provide a suite of \verb+MATLAB+ functions that generate the Chebyshev points and differentiation matrices, and that perform interpolation.

We next outline the implementation of the numerical scheme; we express  the system of  equations~\eqref{non-dim} in vector notation via
\begin{equation}\label{GenFmq}
    \bm{\mathfrak{M}}\frac{\partial\bm{Y}}{\partial{\tau}} = \bm{F}[\bm{Y}],
\end{equation}
where $\bm{Y}\defeq(C,C_p,m_p)^\intercal$ is the concatenated vector of the dependent variables,  $\bm{\mathfrak{M}}$ is a constant $3\times3$ mass matrix and where $\bm{F}$ is a  continuous vector partial-differential operator  in $\zeta$. Below, we denote spatially discretised quantities with tildes; vectors and matrices are additionally in bold. Following the method of lines, we discretise  $\zeta$,  $C$, $C_p$ and $m_p$  in space to have $N$ elements each which we denote by $\tilde{\bm{C}}$, $\tilde{\bm{C}}_p$ and $\tilde{\bm{\mathfrak{M}}}_p$. Concatenating  $\tilde{\bm{C}}$, $\tilde{\bm{C}}_p$ and $\tilde{\bm{m}}_p$  produces  a vector $\tilde{\bm{Y}}$  of length $3N$. Further we discretise $\bm{\mathfrak{M}}$ into a $3N\times3N$ constant mass matrix. Using the first and second order Chebyshev differentiation matrices of size $N \times N$, nested in matrices of size $3N\times3N$ which are otherwise zero,  we discretise the vector operator $\tilde{\bm{F}}$. 

For $\tilde{\bm{C}}_p$ and $\tilde{\bm{m}}_p$ with  $0\leq\zeta \leq l$ and for $\tilde{\bm{C}}$ with $0<\zeta<l$ we have a system of coupled ODEs in time. At $\zeta=0$ and $\zeta=l$, we enforce the spatially discretised boundary conditions on $\tilde{\bm{C}}$, 
(\textit{c.f.,} conditions (\ref{non-dim-BC}a,b)), which are algebraic in $\tau$. Thus the $3N-2$ differential equations and two algebraic equations which enforce the boundary conditions on $\tilde{\bm{C}}$, constitute a system of DAEs. 

We express this system of DAEs using a mass matrix $\tilde{\bm{\mathfrak{M}}}$, which is the Chebyshev spatial discretisation of $\bm{\mathfrak{M}}$. The mass matrix pre-multiplies the time derivative $\partial{\tilde{\bm{Y}}}/\partial{\tau}$ and enables us to enforce the boundary conditions on $\tilde{\bm{C}}$   by setting the first and $N^\text{th}$ rows of $\tilde{\bm{\mathfrak{M}}}$ identically equal to zero.
We integrate this system of DAEs in time in \verb+MATLAB+{\textsuperscript{\textregistered}} using \verb+ode15s+. 
\begin{figure}[htb]
    \centering
\includegraphics[width=1\textwidth]{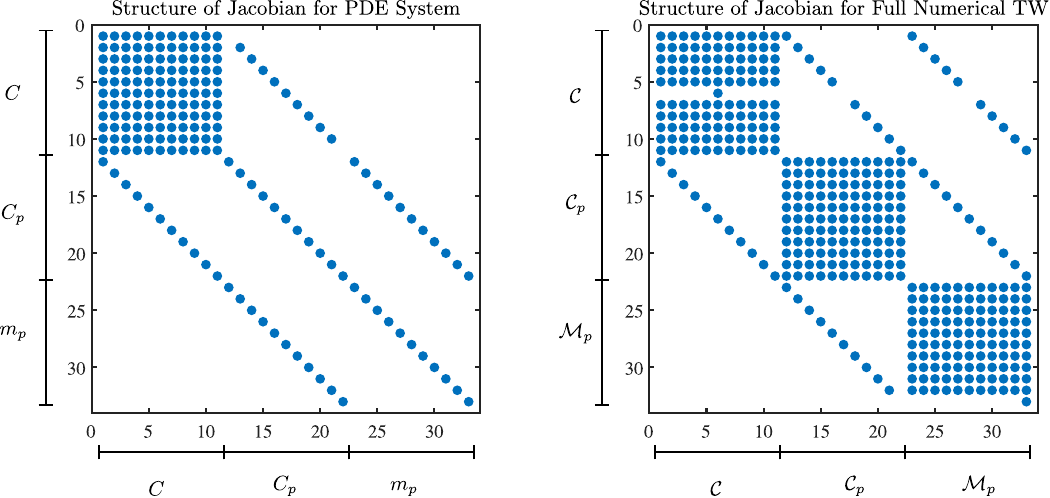} \quad
    \caption{The footprint, or sparcity pattern for the analytically determined Jacobians used in the numerical schemes,  for $N=11$.  \textit{\textbf{Left}}: Full PDE system. \textit{\textbf{Right}}: Full ODE system after travelling wave assumption.  }
        \label{Jacs}
\end{figure}

We provide the solver with an analytical Jacobian
\begin{equation}
\label{Jac_pde}
   \tilde{\bm{\mathfrak{J}}} \defeq \frac{\mathrm{d}}{\mathrm{d} \tilde{\boldsymbol{Y}}} \left(\tilde{\boldsymbol{F}}(\tilde{\boldsymbol{Y}})\right),
\end{equation}
  The spatial discretisation has spectral accuracy,  so the overall accuracy will be determined by the accuracy of the time-stepping performed by  \verb+ode15s+.

Figure \ref{Jacs} (left) shows the footprint or  sparsity pattern of the Jacobian matrix $ \tilde{\bm{\mathfrak{J}}}$ for $N=11$ as defined in Equation~(\ref{Jac_pde}). 
  \subsection{Full system of travelling wave ODEs}
Here, we show the analogous Jacobian but for full travelling wave system (Figure \ref{Jacs}, right). 
  The structural difference between Figure \ref{Jacs} left and  right is that for the PDE we only have spatial derivatives for $C$, thus we only have the dense differentiation matrix in the top left corner, rather than for all three variables as in the ODE system. Note that the internal/boundary conditions are enforced via putting the identity element in the intersection of the row and the column that correspond to that point--- that is, for the internal condition the only entry in row $(N+1)/2$ is at column $(N+1)/2$, while for the boundary condition, the only entry in row $3N$ is at column $3N$.  All codes are available in \url{https://github.com/aguareles/Intra-particle-diffusion.git}.

\section{Proof of Proposition~\ref{prop:TW}}
\label{proofs}
We first provide two lemmas that will be used to prove Proposition~\ref{prop:TW}:
\begin{lemma}
\label{lem:poli}
    Given $\beta, \mu\in\mathbb{R}$ such that $\mu>1$ and $\beta>0$, for any $a,b\in \mathbb{N}$, and for all $x\in[0,1]$ the equation

\begin{equation}
\label{eq:pol}
\beta\left(x-y\right)  = y^a\left(\mu-x\right)^b-(\mu-1)^b x^b\, ,
    \end{equation}
    \begin{itemize}
        \item[(i)] implicitly determines a unique function $y(x)\in[0,1]$ such that $y(x)$ is continuously differentiable in the interval $[0,1]$. 
        \item[(ii)] if $a\leq b$, then $0<y(x)<x$.
    \end{itemize}
\end{lemma}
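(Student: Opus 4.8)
The plan is to recast the implicit relation~\eqref{eq:pol} as the zero set of a single smooth function and then exploit strict monotonicity in $y$. Define
\begin{equation}
G(x,y) \defeq \beta(x-y) - y^a(\mu-x)^b + (\mu-1)^b x^b\, ,
\end{equation}
which is a polynomial in $(x,y)$ and hence $C^\infty$ on $[0,1]^2$; the relation~\eqref{eq:pol} is precisely $G(x,y)=0$. For part~(i) I would fix $x\in[0,1]$ and study $G$ as a function of $y$ alone.

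The key observation is that, since $\mu>1\geq x$ and $\beta>0$,
\begin{equation}
\frac{\partial G}{\partial y} = -\beta - a\,y^{a-1}(\mu-x)^b \leq -\beta < 0
\end{equation}
on $[0,1]^2$, so $y\mapsto G(x,y)$ is strictly decreasing, which yields uniqueness at once. For existence in $[0,1]$ I would check the endpoint signs: $G(x,0)=\beta x+(\mu-1)^b x^b\geq 0$, while, using $\mu-x\geq(\mu-1)x$ (equivalent to $x\leq 1$) and raising to the $b$-th power, $G(x,1)=\beta(x-1)-(\mu-x)^b+(\mu-1)^b x^b\leq 0$. The intermediate value theorem then produces a unique root $y(x)\in[0,1]$, with $y(0)=0$ and $y(1)=1$ from the degenerate endpoint cases. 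Because $\partial_y G$ is bounded away from zero on the whole square, the implicit function theorem applies at every point and the continuous formula $y'(x)=-(\partial_x G)/(\partial_y G)$ extends to the closed interval, giving $y\in C^1([0,1])$.

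For part~(ii) the positivity $y(x)>0$ for $x\in(0,1)$ follows immediately from $G(x,0)>0$ together with strict monotonicity, and requires no hypothesis on $a,b$. The inequality $y(x)<x$ is where the assumption $a\leq b$ enters: by monotonicity it suffices to show $G(x,x)<0$, and since
\begin{equation}
G(x,x) = x^a\bigl[(\mu-1)^b x^{b-a}-(\mu-x)^b\bigr]\, ,
\end{equation}
I would argue that $a\leq b$ forces $x^{b-a}\leq 1$ on $(0,1)$, whence $(\mu-1)^b x^{b-a}\leq(\mu-1)^b<(\mu-x)^b$, the final inequality being strict because $\mu-x>\mu-1$ for $x<1$. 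Thus $G(x,x)<0$ and $y(x)<x$ on $(0,1)$.

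I expect the only genuinely delicate point to be part~(ii): identifying that the correct test value is $y=x$ and pinning down exactly where $a\leq b$ is used (it guarantees $x^{b-a}\leq 1$, without which the comparison can fail). The monotonicity argument, the endpoint sign checks, and the implicit-function-theorem smoothness are all routine by comparison; some minor care is needed at the boundary values $x\in\{0,1\}$, where the strict inequalities in~(ii) degenerate, so~(ii) is naturally read on the open interval $(0,1)$.
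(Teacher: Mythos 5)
Your proof is correct and follows essentially the same route as the paper's: strict monotonicity in $y$ for uniqueness, endpoint sign checks for existence, the implicit function theorem for regularity, and for part~(ii) the evaluation at $y=x$, where $a\le b$ enters exactly as in the paper via $x^{b-a}\le 1$; the only cosmetic difference is that you package the equation as a single strictly decreasing function $G(x,y)$ (via $\partial_y G \le -\beta<0$) rather than the paper's pair $p_1,p_2$ of oppositely monotone functions. As a minor side note, your endpoint value $y(1)=1$ is the correct one --- the paper's proof misstates the unique solution at $x=1$ as $y=0$ --- though this detail affects neither argument.
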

\begin{proof}
 To prove the first statement for a fixed value $x\in(0,1)$ we define 
    $$
    p_1(y) = \beta(x-y)\, ,\quad p_2(y) = y^a\left(\mu-x\right)^b-(\mu-1)^b x^b\, ,
    $$
    and  note that $p_1(y)$ is strictly decreasing while $p_2(y)$ is strictly increasing if $y\in(0,1)$. Further note that, $p_1(0)=\beta x>0$ and $p_2(0)=-x^b(\mu-1)^b<0$ if $x>0$, while $p_1(1)=\beta(x-1)<0$ and
$$p_2(1)=(\mu-x)^b-x^b(\mu-1)^b=(\mu-x)^b\left(1-x^b\left(\frac{\mu-1}{\mu-x}\right)^b\right)>0\, ,$$
if $0<x<1$. Therefore, $p_1(y),p_2(y)$ have a unique intersection point, $y\in(0,1)$. Also, we note that if $x=0$, Equation~\eqref{eq:pol} reads
$$ -\beta y  = y^a\mu^b\, ,$$
and the only solution in the interval $[0,1]$ is $y=0$. When $x=1$, Equations~\eqref{eq:pol} reads
$$ \beta\left(1-y\right)  = y^a\left(\mu-1\right)^b-(\mu-1)^b\, ,$$
whose only solution in $[0,1]$ is also $y=0$. Finally, the regularity of $y(x)$ follows from the Implicit Function Theorem.

As for the second statement, we again use the monotonicity of $p_1(y)$ and $p_2(y)$ and note that $p_1(0)>0$, $p_2(0)<0$, $p_1(x) = 0$ and 
$$p_2(x)=x^a(\mu-x)^b-x^b(\mu-1)^b=x^a(\mu-x)^b\left(1-x^{b-a}\left(\frac{\mu-1}{\mu-x}\right)^b\right)>0\, ,$$
provided $b-a\geq 0$ and $x\in(0,1)$. This shows that in this case there is an intersection point $y$ between zero and one and therefore, the solution $y(x)$ of $p(y)=0$ must be between 0 and $x$.
\end{proof}
\begin{lemma}
\label{lem:roots}
    Given $\mu\in\mathbb{R}$ such that $\mu>1$, for any $a,b\in \mathbb{N}$ such that $a\leq b$, the only roots of the polynomial
\begin{equation}
\label{eq:p}
p(x) = x^a(\mu-x)^b-(\mu-1)^b x^b\,  ,
\end{equation}
in the interval $[0,1]$ are given by $x=0$ and $x=1$. 
\end{lemma}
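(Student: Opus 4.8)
The plan is to first confirm that $x=0$ and $x=1$ are roots and then to rule out any root in the open interval $(0,1)$. Direct substitution gives $p(0) = 0\cdot\mu^b - (\mu-1)^b\cdot 0 = 0$ and $p(1) = (\mu-1)^b - (\mu-1)^b = 0$, so both endpoints are indeed roots. Everything therefore reduces to showing that $p$ has no interior zero.

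On $(0,1)$ I would factor out $x^a$. Since $a\le b$ the power $b-a$ is a nonnegative integer, so writing $x^b = x^a x^{b-a}$ yields
\begin{equation}
p(x) = x^a\left[(\mu-x)^b - (\mu-1)^b x^{b-a}\right].
\end{equation}
As $x^a>0$ on $(0,1)$, an interior zero of $p$ is equivalent to a zero of the bracketed factor $g(x)\defeq (\mu-x)^b - (\mu-1)^b x^{b-a}$, that is, to the identity $(\mu-x)^b = (\mu-1)^b x^{b-a}$. The goal becomes showing this identity fails throughout $(0,1)$.

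To do this I would pass to logarithms, which is legitimate because $\mu - x>0$ (using $\mu>1>x$) and $x>0$ make both sides strictly positive. Define
\begin{equation}
\psi(x) \defeq b\ln\!\left(\frac{\mu-x}{\mu-1}\right) - (b-a)\ln x,
\end{equation}
so that $g(x)=0$ is equivalent to $\psi(x)=0$. A short computation gives
\begin{equation}
\psi'(x) = -\frac{b}{\mu-x} - \frac{b-a}{x} < 0, \qquad x\in(0,1),
\end{equation}
since both terms are negative (using $b\ge 1$, $b-a\ge 0$, $\mu-x>0$, $x>0$). Hence $\psi$ is strictly decreasing on $(0,1)$, and because $\psi(1)=b\ln 1 - (b-a)\ln 1 = 0$, strict monotonicity forces $\psi(x)>0$, and therefore $g(x)>0$, for every $x\in(0,1)$. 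Consequently $p(x)\neq 0$ on $(0,1)$, and the only roots in $[0,1]$ are $x=0$ and $x=1$.

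I expect the argument to be essentially routine; the one point demanding care is that it must cover the boundary case $a=b$ uniformly. The logarithmic reformulation handles this automatically: when $a=b$ the term $(b-a)\ln x$ vanishes and $\psi(x)=b\ln\big((\mu-x)/(\mu-1)\big)$ is still strictly decreasing with $\psi(1)=0$, giving the same conclusion. The only remaining subtlety is bookkeeping of the sign conditions, in particular $\mu-x>0$ guaranteed by $\mu>1$, which is exactly what justifies both the factoring and the passage to logarithms.
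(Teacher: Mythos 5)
Your proof is correct and takes essentially the same route as the paper's: factor out $x^a$ (positive on $(0,1)$) and use monotonicity anchored at $x=1$ to rule out interior zeros of the remaining factor. The only cosmetic difference is that the paper directly compares the strictly decreasing function $\left(\frac{\mu-x}{\mu-1}\right)^b$ with the strictly increasing function $x^{b-a}$, splitting the cases $a<b$ and $a=b$, whereas your logarithmic reformulation $\psi$ merges the two monotonicities into a single derivative computation and handles both cases uniformly.
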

\begin{proof}
We note that $x=0$ is a root of $p(x)$. We then rewrite the polynomial~\eqref{eq:p} as
$$p(x) =x^a(\mu-1)^b\left[\left(\frac{\mu-x}{\mu-1}\right)^b-x^{b-a}\right]\, , $$
and define 
$$ \mathcal{P}_1(x) =\left(\frac{\mu-x}{\mu-1}\right)^b\, ,\qquad \mathcal{P}_2(x) = x^{b-a}\, .$$
If $a<b$, these polynomials satisfy, 
$$\mathcal{P}_1(1)=\mathcal{P}_2(1)=1\, \qquad\mathcal{P}_1(0)= \left(\frac{\mu}{\mu-1}\right)^b>1\, \qquad \mathcal{P}_2(0) = 0\,,$$
and therefore $x=1$ is another root of $p(x)$. The fact that $\mathcal{P}_1(x)$ is strictly decreasing and $\mathcal{P}_2(x)$ is strictly increasing between zero and one invalidates the possibility of finding any other root of $p(x)$ different than $x=0$ and $x=1$.

If $a=b$, $\mathcal{P}_2(x)\equiv 1$ and therefore, since $\mathcal{P}_1(x)$ is strictly decreasing in the interval $[0,1]$, the only intersection point between the two polynomials takes place at $x=1$.
\end{proof}

We now proof Proposition~\ref{prop:TW}, which states:

\vspace{1cm}
\textbf{Proposition 1.} Given $\mu,\beta, c_0\in\mathbb{R}$ such that $\mu>1$, $\beta>0$, and $0<c_0<1$, for any $a,b\in \mathbb{N}$ such that $a\leq b$, the initial value problem given by the equation
\begin{subequations}
    \label{eq:CCp}
    \begin{equation}
        \label{eq:CCp1}
         \frac{\dd\mathcal{C}}{\dd\eta}= -\beta\left(\mathcal{C}-\mathcal{C}_p(\mathcal{C})\right)\, ,       
    \end{equation}
along with the initial condition
\begin{equation}
    \label{eq:CCpIC}
     \mathcal{C}(0) = c_0\in(0,1)\, ,
\end{equation}
where $\mathcal{C}_p(\mathcal{C})$ is implicitly determined by
\begin{equation}
    \label{eq:CCp2}
        \beta\left(\mathcal{C}-\mathcal{C}_p\right)  = \mathcal{C}_p^a\left(\mu-\mathcal{C}\right)^b-(\mu-1)^b\mathcal{C}^b \, ,
\end{equation}
\end{subequations}
is well posed and it has a unique decreasing solution, $\mathcal{C}(\eta)$ satisfying   
\begin{equation}
\label{eq:CCpBC}
    \lim_{\eta\to -\infty} \mathcal{C}= 1\, , \quad \lim_{\eta\to \infty} \mathcal{C} = 0\, .
\end{equation}

\begin{proof}
Lemma~\ref{lem:poli} provides the existence of $\mathcal{C}_p (\mathcal{C})$ as the unique continuously differentiable solution of Equation~\eqref{eq:CCp2},
\begin{equation*}
        \beta\left(\mathcal{C}-\mathcal{C}_p\right)  = \mathcal{C}_p^a\left(\mu-\mathcal{C}\right)^b-(\mu-1)^b\mathcal{C}^b \, ,
\end{equation*}
in the interval $[0,1]$. Therefore the initial value problem~\eqref{eq:CCp},
   \begin{align*}
        & \frac{\dd\mathcal{C}}{\dd\eta}= -\beta\left(\mathcal{C}-\mathcal{C}_p(\mathcal{C})\right)\, , \\
        \label{eq:CCpIC2}
        & \mathcal{C}(0) = c_0\, ,
    \end{align*}
is well posed and it has a unique solution for any initial value $c_0\in(0,1)$. Note further that the equilibrium points of the polynomial~\eqref{eq:CCp1} satisfy $\mathcal{C}=\mathcal{C}_p$ and they are provided by the roots of the polynomial~\eqref{eq:p}, so in particular they are independent of $\beta$. This means that the only equilibrium points in the interval $[0,1]$ are $\mathcal{C}=0$ and $\mathcal{C}=1$. Lemma~\ref{lem:poli} also states that if $\mathcal{C}\in(0,1)$, then $0<\mathcal{C}_p<\mathcal{C}$ which forces the solution of the initial value problem~\eqref{eq:CCp} to be strictly decreasing. Therefore $\mathcal{C}$ connects $\mathcal{C}=1$ with $\mathcal{C}=0$ and so the boundary conditions
\begin{equation*}
    \lim_{\eta\to -\infty} \mathcal{C}= 1\, , \quad \lim_{\eta\to \infty} \mathcal{C} = 0\, ,
\end{equation*}
are satisfied.
\end{proof}

\bibliography{mainbib}

\end{document}